\documentclass[conference]{IEEEtran}
\usepackage{times}

\usepackage{amsmath}
\usepackage{amssymb}
\usepackage{booktabs}
\usepackage{algpseudocode}
\usepackage{algorithm}
\usepackage{enumerate}
\usepackage{subcaption}
\usepackage{gensymb}
\usepackage{amsthm}

\usepackage{enumitem}
\usepackage{booktabs}
\usepackage{multirow}
\usepackage{adjustbox}
\usepackage{tabularx}
\usepackage{lipsum} 
\usepackage{pifont}
\usepackage[most]{tcolorbox}
\usepackage[font=small,labelfont=bf]{caption}

\usepackage[numbers]{natbib}
\usepackage{multicol}
\usepackage[bookmarks=true]{hyperref}

\newtheorem{definition}{Definition}[section]

\newtheorem{proposition}[definition]{Proposition}

\newtheorem{lemma}[definition]{Lemma}

\newtheorem{theorem}[definition]{Theorem}

\begin{document}

\title{Fast Uniform Dispersion of a Crash-prone Swarm}


\author{\authorblockN{Michael Amir}
\authorblockA{Technion - Israel Institute of Technology\\
Email: ammicha3@cs.technion.ac.il}
\and
\authorblockN{Alfred M. Bruckstein}
\authorblockA{Technion - Israel Institute of Technology\\
Email: freddy@cs.technion.ac.il}}


%

\maketitle

\begin{abstract}
We consider the problem of completely covering an unknown discrete environment with a swarm of asynchronous, frequently-crashing autonomous mobile robots. We represent the environment by a discrete graph, and task the robots with occupying every vertex and with constructing an implicit distributed spanning tree of the graph. The robotic agents activate independently at random exponential waiting times of mean $1$ and enter the graph environment over time from a source location. They grow the environment's  coverage by `settling' at empty locations and aiding other robots' navigation from these locations. The robots are identical and make decisions driven by the same simple and local rule of behaviour. The local rule is based only on the presence of neighbouring robots, and on whether a settled robot points to the current location. Whenever a robot moves, it may crash and disappear from the environment. Each vertex in the environment has limited physical space, so robots frequently obstruct each other. 

Our goal is to show that even under conditions of asynchronicity, frequent crashing, and limited physical space, the simple mobile robots complete their mission in linear time asymptotically almost surely, and time to completion degrades gracefully with the frequency of the crashes. Our model and analysis are based on the well-studied ``totally asymmetric simple exclusion process'' in statistical mechanics. 
\end{abstract}

\IEEEpeerreviewmaketitle\captionsetup{belowskip=3pt,aboveskip=1pt}

\section{Introduction}

In swarm robotics, a vast number of autonomous mobile robots cooperate to achieve complex goals \cite{hsiang}. Individual members of the swarm are usually assumed to be simple, expendable, and computationally limited, and to move and act according to online local rules of behaviour. In this work, our goal is to formally study the ability of a simple, ``two-layered'' swarm-robotic system to complete an environmental coverage task called \textit{uniform dispersal} assuming  asynchronicity and that robots may crash whenever they attempt to move.

``Coverage'' algorithms that enable a single- or multi-robot system to cover or explore unknown or dynamically uncertain environments are an important topic in mobile robotics. There has been great interest in applications to, for example, mapping \cite{amigoni2010information, howard2002, corah2017efficient}, servicing and surveillance \cite{chen2013temporal}, or search and rescue operations \cite{jorgensen2017risk, calisi2005autonomous,basilico2011exploration, basilico2016semantically}, and a rich body of theoretical work exists (we refer the reader to the surveys \cite{altshuler2018introduction, galceran2013survey}). A natural coverage problem for robotic swarms is  the  \textit{uniform dispersal} problem, introduced in \cite{hsiang}. In \textit{uniform dispersal}, many robotic agents enter an unknown discrete graph environment over time via one or several source locations and are tasked to eventually occupy every vertex of the graph with a robot while avoiding collisions. 

Swarms are often claimed to be highly fault-tolerant, as redundancy and sheer numbers  can enable the swarm to go on with its mission even if many  robots malfunction \cite{winfield2006faulttolerant}. However, as the size of a robotic fleet grows, so too does the opportunity for error. Specifically, three different complications that arise in multi-robot systems are further exacerbated in the swarm setting:

\textbf{Asynchronicity.} As the number of robots grows, coordinating the robots'  actions becomes a formidable task, as their actions and internal clocks can become highly unsynchronized. 

\textbf{Crashes.} We cannot expect to release a huge swarm of simple robots to an unknown environment without the occurrence of hardware or software faults that may cause robots to crash. 

\textbf{Traffic.} To avoid collisions, we do not wish for there to be too many robots crowding a given area, and so  mobile robots should maintain safe distances from each other. In restricted physical environments, such requirements cause traffic delays, as robots must wait for other robots to move away before entering a target location.

Such challenges are discussed as a central direction of research for swarm robotics in \cite{peleg2005distributed}. If the number of errors scales with the number of robots, are swarms ``worth the trouble''? The purpose of this work is to give a perspective on this question via a formal mathematical analysis. We study, in an abstract setting, the ability of a simple local rule to achieve \textit{uniform dispersal} in the presence of crashes and asynchronicity. We are specifically interested in how the frequency of crashes affects the time to mission completion.

We first describe a ``two-layered'' rule of behaviour for swarms that is capable of achieving uniform dispersal. Using this algorithm, we show that a swarm can complete its mission quickly and reliably in unknown discrete environments, even in the presence of asynchronicity and frequent crashes. Hence, we claim that in our setting, many robots can win against many errors. In the spirit of swarm robotics, the algorithm relies only on local information to dictate robots' actions and is  quite simple. This simplicity makes it amenable to analysis.

Our swarm consists of a large reservoir of simple, anonymous, identical, and autonomous mobile robots that enter the environment over time via a source location $s$. The robots move across a discrete  environment represented by an \textit{a priori unknown} connected graph $G$ whose vertices represent spatial locations. The robots  gradually expand their coverage of the environment by occupying certain locations and assisting other nearby robots in navigational tasks using a local, indirect communication scheme.

The swarm's robots switch between two modes: mobile and settled. The settled robots act as `nodes' of the current coverage of the graph environment, and the mobile robots move between locations with settled robots until they can find a new location where they themselves can settle. The settled `robot nodes' are capable of \textit{pointing to} (``\textit{marking}'') a single neighbouring location where there is another settled robot. ``Marking'' is understood to be a generic capability of the robots and could be accomplished by many different technologies, such as local radio communication or visual sensing; we refer to the Related Work section for possible implementations. 

As more and more mobile robots become settled, their marks serve as a navigational network of the environment that is utilised by the remaining mobile robots. The mobile robots are capable of sensing the number of robots in neighbour locations, and sensing when a settled robot is pointing to (marking) their location. They rely only on this information to make decisions. Hence, they operate in a GPS-denied, low memory setting, meaning they act based only on local communications and  local geographic features. The robots are tasked with settling at every vertex of $G$,  and constructing an implicit \textit{spanning tree} of $G$ via the settled robots and their pointer marks. 

There are no restrictions on $G$ as long as it is connected. In principle, different robots need not even agree on the graph representation of their environment for our algorithm to work (e.g., in case they gradually build it from local sensory data), as the settled robots gradually construct a spanning tree which all robots agree on and use to move between locations. We assume, for simplicity, that they share the same representation. 

\textbf{Physical constraints and asynchronicity.} We model the mobile robots as activating  repeatedly at stochastic, independent exponential waiting times of rate $1$. When a robot activates, it may move or move-and-settle at a nearby location (once  a robot settles, it remains stationary). We assume the physical constraint that any given location may contain no more than a single mobile robot and a single settled robot (and perhaps a number of crashed robots). Frequent traffic obstructions occur as robots block each other off from progressing.

This model of asynchronicity and limited vertex capacity in a graph environment is motivated by the \textit{totally asymmetric simple exclusion process} (TASEP) in statistical mechanics. There is an extensive literature on this process as a model for a great variety of transport phenomena, such as traffic flow \cite{chowdhury2000statistical} and biological transport \cite{chou2011biologicaltasep}. Rigorous exact and asymptotic results for TASEP are known \cite{Johansson2000,  tracy2009asymptotics}, and our analysis technique shall be to compare our swarm's performance to a two-layered TASEP-like process. Since our robots are mostly in a state of "traffic flow" (waiting for other robots to move), references such as \cite{chowdhury2000statistical} suggest that our model in fact  captures many of the relevant traffic phenomena that will occur in real life implementations.

\textbf{Adversarial crashing.} Similar to, e.g., \cite{jorgensen2017risk}, we consider a risky traversal model where robots may crash whenever they try to move across an edge. We assume robots remain safe when not moving, as remaining put is less risky than travelling (in fact, we need just the weaker assumption that settled robots, which \textit{never} move, are safe). To facilitate analysis, we assume crashed robots do not prevent travel between the graph's vertex hubs. This assumption is applicable when such robots can be manoeuvred around or pushed aside, or the crash causes the robot to disappear. For example, we may consider crashed air-based robots falling to the ground during exploration of an environment. Alternatively, in a ground robot scenario, we can with foresight expand vertex sizes to be big enough such that vertices can contain a small number of crashed robots in addition to the two active robots (and such local crashed robots are then bypassed using, e.g., local collision avoidance). 

Since, in our model, at most one new robot may arrive in the environment per time step, we assume that the number of crashes that occur is bounded by the current time $t$, and a parameter $c$ which reflects the frequency at which crashes occur over time. When $c$ is close to $1$, the \textit{vast majority} of robots that enter the environment will crash before achieving anything.

Besides these limitations, we assume nothing more about the  crashes that occur. In particular, a \textit{virtual adversary} may choose crashes so as to be as obstructing as possible.

\textbf{Results.} We describe a local rule of behaviour (Algorithm \ref{alg:localrule}) that can achieve uniform dispersion, even in the presence frequent crashes and traffic obstructions. The rule is easy to understand and implement and is well-suited for a swarm of simple robots, mimicking a kind of branching depth-first search. In many mobile robot systems one wishes to construct a spanning tree of the environment for purposes of mapping, routing or broadcasting \cite{abbas2006distributedspanning, agmon2006spanning, broder1989spanning, mapdrawing, gabriely2001spanning}. Our  rule achieves this as well, by having robots act as nodes of the tree, and making them aware of their immediate descendants. Our goal is to study how crashes, asynchronicity, and traffic affect the swarm's performance under this rule of behaviour. 

We prove that our robots are able to complete their mission in time linear in the size of the environment, and that performance degrades gracefully (by a factor $(1-c)^{-1}$) with frequency of crashes. Given our assumptions and algorithm, it is not surprising that the robots can complete the dispersion assuming some crashes; rather, we show that even with many frequent crashes, the robots can still do so efficiently. 

Specifically, let $n$ be the number of vertices in the environment $G$. We prove that dispersal completes before time $8 \cdot \big((1-c)^{-1} + o(1)\big)n$ asymptotically almost surely (meaning with probability approaching $1$ as $n$ grows)--a worst-case bound on performance. No dispersal algorithm can complete in less than $O(n)$ expected time, since this is the time it takes to even explore $n$ vertices, so when there are no crashes (but still there is traffic and asynchronicity) this bound is asymptotically tight. For, say, $c = 0.5$, we expect up to (roughly) $50$\% of robots to crash before achieving anything, and our analysis says that therefore the swarm will take twice as long to achieve dispersal. This seems intuitive, but consider that the robots that eventually crash are (uselessly) present in the environment in the time leading to the crash, blocking other robots from entering or progressing. The analysis says that nevertheless, the ability of the rest of the swarm to achieve its goal is not disproportionately worsened.

To the best of our knowledge, with or without crashes, we are the first to consider a non-synchronous setting for the uniform dispersal problem where time to completion can explicitly be bounded, hence also the first to give explicit performance guarantees in a non-synchronous setting. In an asynchronous as opposed to a synchronous setting, there are many more possible configurations that the robots might exist in, which makes the analysis more difficult. We believe the references and techniques from statistics \cite{Johansson2000,  tracy2009asymptotics, chowdhury2000statistical} might be of general interest for tackling these kinds of topics.

Our analysis extends also to a synchronous time setting, and to the case where robots enter the environment from \textit{multiple} locations. Multiple entrance locations result instead in the robots constructing instead an implicit \textit{spanning forest}. In both these settings, dispersion completes faster. The bound on performance  we derive for the synchronous case is exact.

Finally, we confirm our findings by numerically simulating our system in a number of environments and measuring performance.  

\vspace{-0.5em}
\subsection{Related work}
\vspace{-0.25em}
Uniform dispersal was introduced by Hsiang et al. in \cite{hsiang} for discrete grid environments of connected pixels (but their work can be extended to arbitrary graph environments). They considered a synchronous time setting where robots are allowed to send short messages to nearby robots, and showed time-optimal algorithms for this setting. Many variations have since been studied. Barrameda et al. extend the problem to the asynchronous setting with no explicit non-visual communication  \cite{barrameda2013uniform, barraswarm1}. Recent works include dispersal with weakened sensing \cite{hideg2017uniformtime}, dispersal in arbitrary graph environments \cite{dispersalgraphs2019}, and dispersal under energy constraints  \cite{arxivminimizingtravel}. Our model differs from previous work on several central points, including the presence of crashes, the two layers, and the ability to mark neighbours. Marking is weaker than the radio  communication available to robots in \cite{hsiang}, that enables robots to transfer many bits of data locally, but stronger than the indirect, visual communication assumed in several other works. 

\begin{table}[]
\centering
\fontsize{7}{7}\selectfont
\begin{tabularx}{\textwidth}{p{1cm}p{1.1cm}p{0.7cm}p{1.4cm}p{1cm}p{0.8cm}}
   Reference                                       & Environment                       & Time                  & Communication              & Makespan         & Crashes               \\
\centering\arraybackslash\cite{hsiang}               & Arbitrary          & Synch.             & Radio           & $O(n)$           & \centering\arraybackslash x                   \\
\centering\arraybackslash\cite{arxivminimizingtravel} & Hole-less grid & Synch.              & Visual                   & $O(n)$           & \centering\arraybackslash x                    \\
\centering\arraybackslash\cite{hideg2017uniformtime}  & Grid       & Synch.              & Visual                     & $O(n)$           & \centering\arraybackslash x                    \\
\centering\arraybackslash\cite{barraswarm1}           & Hole-less grid & Asynch.             & Visual                     &  undefined        & \centering\arraybackslash x                    \\
\centering\arraybackslash\cite{barrameda2013uniform}  & Grid         & Asynch.             & Radio & undefined         & \centering\arraybackslash x                 \\
Our work                                      & Arbitrary          & Stochastic Asynch. & Marking  & $O(n)$ & \centering\arraybackslash\checkmark
\end{tabularx}
\newline
\newline
\vspace{-1em}
\caption{A comparison of works on uniform dispersal.}
\vspace{-3em}
\label{tablecompare}
\end{table}

Because of differences in the settings, assumptions, and constraints, quantitative comparison of works on uniform dispersal is very difficult. Table \ref{tablecompare} gives a \textit{rough, non-exhaustive} overview of some differences, such as the supported kinds of environments (grid environment, hole-less grid environment, or arbitrary graph environment), synchronous versus asynchronous time, expected makespan (i.e., how long it takes the robots to complete their mission), and whether crashes are considered in the model.

Robotic coverage, patrolling, and exploration with adversarial interference, as well as crashes, have been studied in different problem settings from our own. Agmon and Peleg studied a gathering problem for robots where a single robot may crash \cite{agmon2006fault}, and gathering with multiple crashes was later discussed by Zohir et al. in a similar setting \cite{bouzid2013gathering}. Robotic exploration in an environment containing threats has been studied in \cite{yehoshua2013robotic, yehoshua2015frontier}. Moreover, adversarial crashes of processes are often studied in general distributed algorithms (e.g., \cite{delporte2011disagreement}). Differing from many of these works, we study a situation where the number of crashes scales with the mission's complexity (the time it takes to cover the environment), and where even the vast majority of robots may crash. However, to enable this, we assume access to a huge reservoir of robots waiting to replace crashed robots--i.e., a robotic swarm.
 
Robotic coverage in various hazardous or adversarial GPS-denied settings has become an important topic in recent decades, since this opens the possibility of deploying robotic swarms in the real world, outside laboratory conditions \cite{galceran2013survey, altshuler2018introduction, agmon2017adversarialrobotic}. Theoretical and empirical results about the performance of swarms in such settings may help inform our expectations of real world swarm-robotic fleets. To implement such systems in practice, the robots themselves must be capable of relative visual localization. This poses a technical challenge, as considerations of depth, angle of view, and persistent coverage come into play. In \cite{biswas2012depth} a system of relative visual localization for mobile ground vehicles with low computing power is proposed. The system enables autonomous ground vehicles to navigate their environment while avoiding obstacles. In \cite{saska2017system} a relative visual localization technique is developed for small quadcopters, with similar capabilities. In \cite{prorok2011reciprocal} the authors discuss a localization algorithm for lightweight asynchronous  multi-robot systems with lossy communication. These are examples of the techniques that may be used for the sensors of robots in such systems as the one described in this paper (see similar discussion in \cite{arxivminimizingtravel}). 
 
A fascinating introduction to TASEP-like processes and their connection to other fields is \cite{kriecherbauer2010pedestrian}.

\section{Model and System}
\label{modelsection}

We consider a swarm of mobile robotic agents performing world-embedded calculations on an unknown discrete environment represented by a connected graph $G$. The vertices of $G$ represent spatial locations, and the edges represent connections between these locations, such that the existence of an edge $(u,v)$ indicates that a robot may move from $u$ to $v$. 

We assume an infinite collection of robots (also referred to as `agents') attempt to enter $G$ over time through a  \textit{source} vertex $s \in G$. The robots are identical and execute the same algorithm. They begin in the  \textit{mobile} state, and eventually enter the \textit{settled} state. Settled robots are stationary, and are capable of \textit{marking} a neighbouring vertex that contains another settled robot. Mobile robots move between the vertices of $G$ and sometimes crash while in motion. They are oblivious, and decide where to move based only on local information provided by their sensors: the number of robots at neighbouring vertices, and whether any of the neighbouring settled robots mark their current location. Each vertex has limited capacity: it can contain at most one settled and one mobile robot.

Mobile robots are only allowed to move to a neighbouring vertex when they are \textit{activated}. Each robot, including robots outside $G$, reactivates infinitely often and independent of other robots, at random exponential waiting times of mean $1$. 

When $s$ contains less than two robots, robots from outside $G$ attempt to enter it when they are activated. It is convenient to give the robots arbitrary labels $A_1, A_2, \ldots$ and assume that $A_i$ cannot enter $s$ before all robots with lower indices entered or crashed. This assumption makes the analysis simpler, but the performance bound we prove in this work holds also for the entrance model where robot entrance depends only on which robot is activated first. Hence, whenever the current lowest-index robot outside of $G$ activates and there is no \textit{mobile} robot at $s$, it moves to $s$. If $s$ is completely empty, the robot settles upon arrival and becomes the root of the spanning tree. Otherwise it remains a mobile robot. 

We denote by $G(t)$ the graph whose vertices are vertices of $G$ containing settled robots at time $t$, and there is a directed edge $(u,v) \in G(t)$ if $u$ is marked by a settled robot at $v$. The goal of the robots is to reach a time $T$ wherein $G(T)$ is a spanning tree of the entire environment $G$. The \textit{makespan} of an algorithm is the first time $T_0$ when this occurs. 

Crashes are modelled as follows: when a robot $A_i$ is activated and attempts to enter $s$ or move from $u$ to $v$ via the edge $(u,v)$, occasionally an \textit{adversarial event} occurs, causing the deletion of $A_i$ from $G$.  Robots do not crash unless attempting to move. Hence, mobile robots are volatile but settled robots are safe. This assumption is somewhat stronger than necessary: our results still hold if mobile (but not settled) robots are allowed to crash while they stay put, but this tediously lengthens the analysis. We assume the number of adversarial events before time $t$ is bounded by a fraction of $t$. Adversarial events may otherwise be as inconvenient as possible: we may assume there is an \textit{adversary} choosing crashes to maximize the makespan of our algorithm. 

Unless stated otherwise, when discussing the configuration of robots ``at time $t$'', we always refer to the configuration before any activation at time $t$ has occurred.

\section{Dispersal and Spanning Trees}

We study a simple local behaviour (Algorithm \ref{alg:localrule}) that disperses robots and incrementally constructs a distributed spanning tree of $G$. The rule determines the behaviour of \textit{mobile} robots whenever they are activated (settled robots merely remain in place and continue to mark their target). We prove that using this rule, the makespan is linear in the number of vertices of $G$ asymptotically almost surely, and that performance degrades gracefully with the density of crashes.

\begin{algorithm}[!htb]
  \caption{Local rule for a mobile robot $A$.}
  \begin{algorithmic}
    \State Let $v$ be the current location of $A$ in $G$ (if $A$ is outside $G$, see Section \ref{modelsection}).
    \If{a neighbour $u$ of $v$ contains exactly one robot, and this robot marks  $v$}
        \State Attempt to move to $u$. 
    \ElsIf{a neighbour $u$ of $v$ contains no robots}
        \State Attempt to move to $u$ and become \textit{settled} if no crash. 
        \State \textit{Mark} the vertex $v$.
    \Else
        \State Stay put.
    \EndIf
  \end{algorithmic}
  \label{alg:localrule}
\end{algorithm}

The rule grows $G(t)$ as a partial spanning tree of $G$. It acts as a kind of depth first search that splits into parallel processes whenever a mobile robot is blocked by another mobile  robot. Every vertex of the tree $G(t)$ is marked by settled robots at its descendants. Mobile robots follow these marks to discover the leaves of the current tree $G(t)$ and expand it. Robots grow the tree by settling at unexplored vertices that then become new leaves. Our main result is Theorem \ref{alg1performancethm}:

\begin{theorem}
If for all $t$ the number of adversarial events before time $t$ is allowed to be at most $ct/4$, $0 \leq c < 1$, then the makespan of Algorithm \ref{alg:localrule} over graph environments with $n$ vertices is at most $8((1-c)^{-1}+o(1))n$ asymptotically almost surely as $n \to \infty$. \label{alg1performancethm}
\end{theorem}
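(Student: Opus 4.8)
The plan is to recast the tree-building dynamics as a one-dimensional exclusion process and then import sharp asymptotics for that process. The first observation is that under Algorithm~\ref{alg:localrule} the mobile robots move only ``downward'' along the settled backbone: a mobile robot may hop from $v$ to a child $u$ only when $u$ holds \emph{exactly} its settled robot, so two mobile robots never share a vertex and never cross. Restricted to any fixed root-to-leaf branch this is precisely a totally asymmetric simple exclusion process whose particles are the mobile robots, with a source at $s$ and an absorbing, advancing frontier: each particle that reaches the current leaf settles, lengthening the branch by one. Because a fresh robot enters $s$ at rate $1$ whenever $s$ is free, the source continually drives the process, so the relevant regime is the maximal-current phase of TASEP, where the bulk density is $\tfrac12$ and the current is $\tfrac14$. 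The emergent branching described in the text (a robot diverted to a new vertex when its preferred child is occupied) and the ``capping'' of a completed subtree by a mobile robot that can neither descend nor settle are exactly the mechanisms that keep this picture one-dimensional along each branch.

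The core estimate is for a single lane without crashes. Here I would use the standard bijection between this corner-growth/TASEP dynamics and last-passage percolation with i.i.d.\ $\mathrm{Exp}(1)$ weights, under which the time at which the $k$-th vertex of the branch becomes settled equals a last-passage time with mean asymptotic to $4k$ (the reciprocal of the current $\tfrac14$). Johansson's theorem \cite{Johansson2000}, with the Tracy--Widom fluctuation results \cite{tracy2009asymptotics}, then gives that this settling time is $4k + O(k^{1/3})$ together with an exponential upper-tail bound on deviations. Since any branch has length at most $n$, and since these tails are summable, a union bound over the at most $n$ branches is absorbed into the $o(1)$ factor and shows that, absent crashes, every vertex is settled by time $4n(1+o(1))$ asymptotically almost surely. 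To pass from a single branch to the whole graph I would argue by monotone coupling that the single long path is the \emph{worst} case: adding branches only opens additional frontiers and thus additional settling opportunities, so the aggregate settling rate never drops below the single-lane rate $\tfrac14$; diverted and stuck robots only ever remove competition for a frontier, never starve it.

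The genuine difficulty is the adversary. A robot that is destined to crash still occupies and blocks a vertex for the entire interval preceding its crash, so one must rule out that cleverly timed crashes inflate the makespan super-linearly rather than merely proportionally. The hypothesis is calibrated precisely for this: at most $ct/4$ crashes before time $t$ means the crash rate is at most $c/4$, to be weighed against a frontier-feeding current of $\tfrac14$, so morally the adversary can siphon off at most a $c$-fraction of the settling flux. I would make this rigorous by coupling the crash-laden dynamics to the crash-free single-lane TASEP, deleting one particle per adversarial event and \emph{charging} the blocking caused by a soon-to-crash robot to that same deleted particle; the net effect is that the effective current drops from $\tfrac14$ only to at least $\tfrac{1-c}{4}$. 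Integrating this progress rate against the required $n$ settlings yields a completion time of $\tfrac{4n}{1-c}(1+o(1))$, and the extra factor of $2$ in the stated bound $8(1-c)^{-1}n$ is slack I would deliberately spend so that the worst-case coupling, the monotonicity over branchings, and the union bound over branches all close without optimising constants. I expect this charging argument against an unrestricted adversary to be the main obstacle, precisely because it must certify that blocking by robots that never contribute cannot be leveraged into more than a proportional slowdown.
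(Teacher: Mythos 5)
Your high-level instinct---reduce to TASEP with step initial condition, invoke Johansson's $t/4$ current with Tracy--Widom fluctuations, and subtract the crash rate $c/4$ from the current---is exactly the engine the paper uses. But there is a genuine gap in the step where you claim that, absent crashes, every vertex is settled by time $4n(1+o(1))$ and that the single long path is the worst case by a monotone coupling (``adding branches only opens additional frontiers''). That monotonicity statement is false for the makespan: the paper explicitly observes that $\mathcal{P}(n)$ does \emph{not} maximize the ordinary makespan (its simulation table shows a branched ``indoor'' environment beating the path), and the worst case is not $n$ settlings fed at current $\tfrac14$. Consider $\mathcal{P}(n)$ with the source at $v_2$ and robots initially branching rightward: the vertex $v_1$ is settled only after every vertex $v_3,\dots,v_n$ holds \emph{two} robots (one settled, one permanently blocked mobile robot), so roughly $2n$ robots must enter before the last settling, and the crash-free makespan is $8n$, not $4n$. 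The factor of $2$ you propose to ``deliberately spend as slack'' is therefore not slack at all---it is forced by the dynamics, and your per-branch last-passage estimate ($k$-th vertex settled by time $\approx 4k$ plus a union bound over branches) does not produce it, because branches compete for robots at the shared source and a source-adjacent vertex can be the last one settled. The paper's resolution is to introduce the \emph{slow makespan} (the first time every vertex carries a settled robot and a blocked mobile robot), prove by a delicate two-part induction on robot depths that $\mathcal{P}(n)$ maximizes \emph{that} quantity over all $n$-vertex environments under a coupled event order, and then count $2n$ required entries at current $\tfrac{1-c}{4}$, which is where $8(1-c)^{-1}n$ comes from.

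The second soft spot is the crash-handling ``charging argument,'' which you correctly identify as the main obstacle but leave as a hope. The paper sidesteps any reasoning about blocking by doomed robots via a chain of couplings $G \to \mathcal{P}(n) \to \mathcal{P}(\infty) \to \mathcal{P}^*(\infty) \to B$, where $B$ is the deletion-free step-initial-condition TASEP; a comparison lemma shows the number of particles crossing the origin in $B$ is at most the number of robots that entered (or were deleted before entering) the path environment, so the adversary's entire effect is captured by subtracting the deterministic budget $ct/4$ from $B_t$. If you want to salvage your outline, you need (i) a correct worst-case environment statement---which requires something like the slow-makespan device, since the naive one fails---and (ii) a concrete replacement for the charging step; as written, both the constant and the adversary bound rest on claims that do not hold.
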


Figure \ref{fig:simulation} shows an execution of our algorithm on a grid environment with  $n=62$ square vertices (white region) and obstacles (blue region). We allowed a naive adversary to arbitrarily delete at most $ct/4$ robots before time $t$, with $c = 0.8$. This corresponded to a deletion of 56\% of robots that entered the environment before the makespan. In a more constrained topology (such as a path graph, see Section \ref{pathgraphsection}), the robots would progress more slowly, and a greater percentage would be deleted. The makespan (bottom right figure) was $613$, consistent with the upper bound of Theorem \ref{alg1performancethm}. After the  the spanning tree completes, robots keep entering the region until there are two robots at every vertex. This is related to the ``slow makespan'', which we will later define. The slow makespan was 831. See Section \ref{simulationsection} for more  simulations. 

\begin{figure}[htb]
    \centering 
\begin{subfigure}{0.13\textwidth}
  \includegraphics[width=\linewidth]{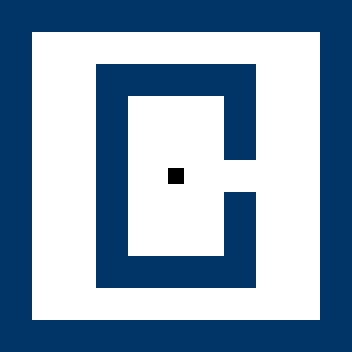}
  \label{fig:1}
\end{subfigure}\hfil 
\begin{subfigure}{0.13\textwidth}
  \includegraphics[width=\linewidth]{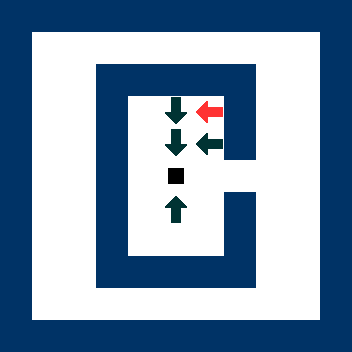}
  \label{fig:2}
\end{subfigure}\hfil 
\begin{subfigure}{0.13\textwidth}
  \includegraphics[width=\linewidth]{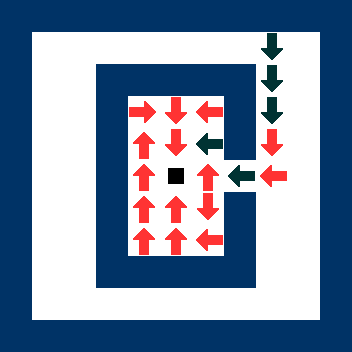}
  \label{fig:3}
\end{subfigure}

\medskip
\vspace{-1.5em}
\begin{subfigure}{0.13\textwidth}
  \includegraphics[width=\linewidth]{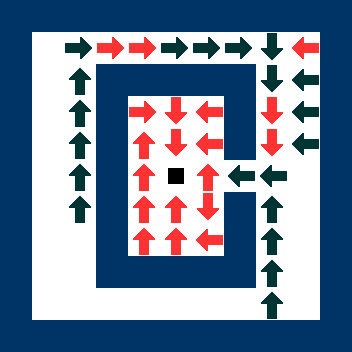}
  \label{fig:4}
\end{subfigure}\hfil 
\begin{subfigure}{0.13\textwidth}
  \includegraphics[width=\linewidth]{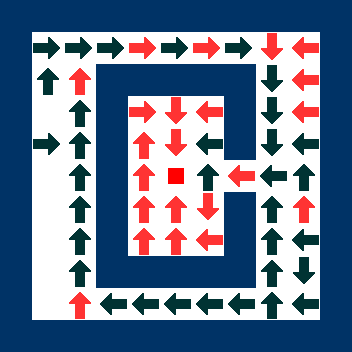}
  \label{fig:5}
\end{subfigure}\hfil 
\begin{subfigure}{0.13\textwidth}
  \includegraphics[width=\linewidth]{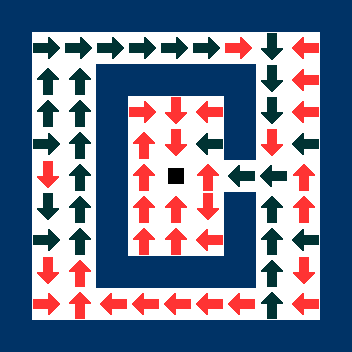}
  \label{fig:6}
\end{subfigure}
\vspace{-1em}
\caption{An execution of Algorithm \ref{alg:localrule} on a grid environment. The source is denoted by a square box in the center. The arrows denote settled robots, and their direction points to the adjacent location with a settled robot  that they mark. Red arrows indicate a mobile robot is on top of the settled robot (note that by the algorithm, a mobile robot will never occupy a vertex that does not have a settled robot). The environment is a priori unknown to the robots, and they construct a spanning tree representation over time.}
\label{fig:simulation}
\vspace{-2em}
\end{figure}

\subsection{Analysis}
\label{analysissection}

We study the makespan of Algorithm \ref{alg:localrule}. Some of the proofs are placed in the Appendix.

For the analysis, we will assume that robots from $A_1, A_2, \ldots$ that settle or crash keep being activated. This is a purely ``virtual'' activation: such robots of course do and affect nothing upon being activated.  We start with a structural Lemma:

\begin{lemma}
\label{treelemma}
$G(t)$ is a tree at all times $t$ with probability $1$.
\end{lemma}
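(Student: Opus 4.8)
The plan is to argue by induction on the (almost surely well-ordered) sequence of instants at which some robot settles, showing that each such event attaches a single new leaf to $G(t)$ and that nothing else ever modifies $G(t)$. Since the activation clocks are independent exponentials, with probability $1$ no two activations coincide, so settling events occur one at a time and can be linearly ordered; this is the only place the ``probability $1$'' in the statement is used.

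First I would pin down what can change $G(t)$. By the model, settled robots never move, and per Algorithm \ref{alg:localrule} a robot keeps marking the same target forever once it settles; hence existing vertices and existing edges of $G(t)$ are never deleted or redirected. Crashes delete only mobile robots, and mobile robots are not vertices of $G(t)$, so crashes leave $G(t)$ untouched. Therefore the sole way $G(t)$ changes is when a mobile robot executes the second branch of Algorithm \ref{alg:localrule}: it moves to an empty neighbour $u$, settles there, and marks its previous vertex $v$, adding exactly one vertex and one edge.

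Next I would establish the structural invariant that every mobile robot always occupies a vertex that already holds a settled robot, i.e.\ a vertex of $G(t)$. This follows by tracking the two ways a mobile robot comes to rest: in the first branch it moves onto a vertex $u$ holding exactly one robot that marks $v$, and that robot is settled since only settled robots mark, so $u \in G(t)$; in the second branch it moves to an empty vertex and \emph{immediately} settles, so it never lingers on an empty vertex. At the source, a robot finding $s$ occupied stays mobile on $s$ (which holds the root), while a robot finding $s$ empty settles as the root. With this invariant, in any settling event the origin $v$ already lies in $G(t)$ while the destination $u$ was empty and hence was \emph{not} a vertex of $G(t)$; note $u \neq v$ since $u$ is empty and $v$ holds a settled robot.

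With these facts the induction is immediate. The base case is the first settled robot at $s$, a single-vertex tree. For the inductive step, suppose $G(t^-)$ is a tree just before a robot settles at $u$ coming from $v$; the event adds the new vertex $u$ together with the single mark-edge between $u$ and $v$, where $v \in G(t^-)$ and $u \notin G(t^-)$. Attaching a new degree-one vertex to a tree yields a tree, preserving both connectivity and acyclicity, so $G(t)$ is again a tree. The main (mild) obstacle is the bookkeeping for the invariant, in particular confirming that settling occurs only onto genuinely empty vertices and that marks are never rewritten, after which the tree property propagates automatically. If a rooted orientation is wanted, one further observes that each non-root settled robot marks exactly one vertex, giving every non-root vertex in-degree one and the root in-degree zero, so $G(t)$ is in fact an arborescence rooted at $s$.
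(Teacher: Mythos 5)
Your proposal is correct and follows essentially the same argument as the paper: settled robots and their marks are never removed, each settling event attaches exactly one new leaf (the previously empty vertex) to the existing tree via a single edge to the settler's previous location, and the probability-$1$ clause is used only to rule out simultaneous activations. Your explicit invariant that mobile robots always sit on vertices of $G(t)$ is a worthwhile piece of bookkeeping that the paper leaves implicit, but it does not change the structure of the proof.
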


\begin{proof} When the first robot enters and successfully settles, $G(t)$ contains only $s$. No settled robots are ever deleted, so $G(t)$ can only gain new vertices. Whenever a mobile robot settles, it extends the tree $G(t)$ by one vertex, connecting its current location $v$ to $G(t)$ via a single directed edge. By definition, the edge is directed from the vertex the settled robot \textit{marks}--which is its previous location--to $v$. This turns $v$ into a leaf of $G(t)$. With probability $1$ no two robots on $G$ activate at the exact same time, so no two robots settle the same vertex. Hence $G(t)$ remains a tree.
\end{proof}

\subsubsection{Event orders}
\label{eventordersection}
We explain how we intend to bound  the makespan. Our strategy shall be to use coupling to compare the performance of Algorithm \ref{alg:localrule} by the performance of different random processes of robots moving on different structures. Coupling is a technique in probability theory for comparing different random processes (see \cite{lindvall2002coupling}). 

The basic idea is this: whenever we run Algorithm \ref{alg:localrule} on $G$, we can log the exact times at which the robots activate, as well as the times adversarial events happen and which robots they affect. This gives us an \textit{order of events} $S$ sampled from some random distribution. Note that robots keep activating forever (but these activations do nothing once the graph is full), so $S$ is infinitely long.
  We then ``re-enact'' or \textbf{``simulate''} $S$ on a new environment (or several new environments)  involving the robots $A_1, A_2, \ldots$ by activating and deleting the robots according to $S$.
  
  To make things more precise, by ``simulating'' $S$ on different environments  we mean that we consider the \textit{coupled}  process $(G, G_2, \ldots, G_m)$ wherein different environments $G, G_2, \ldots G_m$ have robots that are \textit{paired} such that whenever $A_i$ in $G$ is scheduled for an activation or a deletion according to the event order $S$ ($S$ is simply an infinite list of scheduled activation and deletion times), the copies of $A_i$ in \textit{all} the environments $G_2, \ldots G_m$ also activate or are deleted. When the copies of $A_i$ are activated they act according to  Algorithm \ref{alg:localrule} with respect to their local neighborhood. Robots entrances are modelled as usual (Section \ref{modelsection}), but note that even if $A_i$ manages to enter $G$ following an activation, its copy  might not enter its own environment because in that environment the entrance is blocked, or there is a lower-index robot waiting to enter. During Algorithm \ref{alg:localrule}'s analysis, we will often be talking about a deterministic event order $S$ being simulated over different environments. The end-goal, however, is to say something about the event order $S$ when it is randomly sampled from the  execution of Algorithm \ref{alg:localrule} on $G$. 
  
  The event order $S$ must be a \textit{possible} set of events that occurred during an execution of our algorithm on the base graph environment $G$. This means, due to our model, that a robot $A_i$ in $G$ will never be scheduled for deletion except at times when it is activated and attempts to move. However, while simulating $S$ on the environments $G_2, \ldots G_m$, we must be allowed to break the rules of the model: we might delete robots even when they don't attempt to move, or while they are outside of the new graph environment. Whenever we say ``for any event order $S$'', we mean event orders $S$ \textbf{that could have happened over $G$}. 
  
  In $S$, define $t_0$ to be the first time $A_1$ activates, $t_1$ to be the first time \textit{after $t_0$} that either $A_1$ or $A_2$ activate, and $t_i$ to be the first time $t > t_{i-1}$ that any robot in the set $\{A_1, \ldots, A_{i+1}\}$ is activated. 
  
  \begin{definition}
  The times $t_0 < t_1 < t_2 < \ldots$ in $S$ are called the \textbf{meaningful event times} of $S$.  
  \end{definition}
  
  For meaningful event times to be well-defined there must be a \textit{minimal} time $t > t_{i-1}$ where one of the robots ${A_1, \ldots, A_i}$ activates. Because the activation times of the robots are independent exponential waiting times of mean $1$, this is true with probability $1$ for a randomly sampled $S$. Moreover, with probability $1$,  at any time $t_i$ there is precisely one robot $A$ of $A_1, A_2, \ldots, A_{i+1}$ scheduled for activation by $S$. Because both these things are true with probability $1$, we assume they are true for any event order  $S$ referred to at any point in this analysis. This does not affect our main result (Theorem \ref{alg1performancethm}), which is probabilistic.  
  
  Our end-goal is randomly sample $S$ from $G$ and simulate it on four increasingly ``slower''  environments: $\mathcal{P}(n)$, $\mathcal{P}(\infty)$, $\mathcal{P}^*(\infty)$, $B$, so that all environments ($G$ and these four) are coupled. \textit{Meaningful event times} are so called because, prior to the first  activation of $A_i$, any of the robots $A_{i+1}, A_{i+2}, \ldots$ cannot enter or move in any of these environments, and activating them causes nothing. Hence, at any time $t$ which is not a meaningful event time, the configuration of robots cannot change (no robots move and no robots are deleted in any of the environments $S$ is simulated on). 
  
  The possibility to create an event order $S$ is the only reason we labelled the robots and made the assumption about entrance orders in Section \ref{modelsection}. 
  
\begin{figure}[htb]
\centering 
\includegraphics[width=0.5\textwidth]{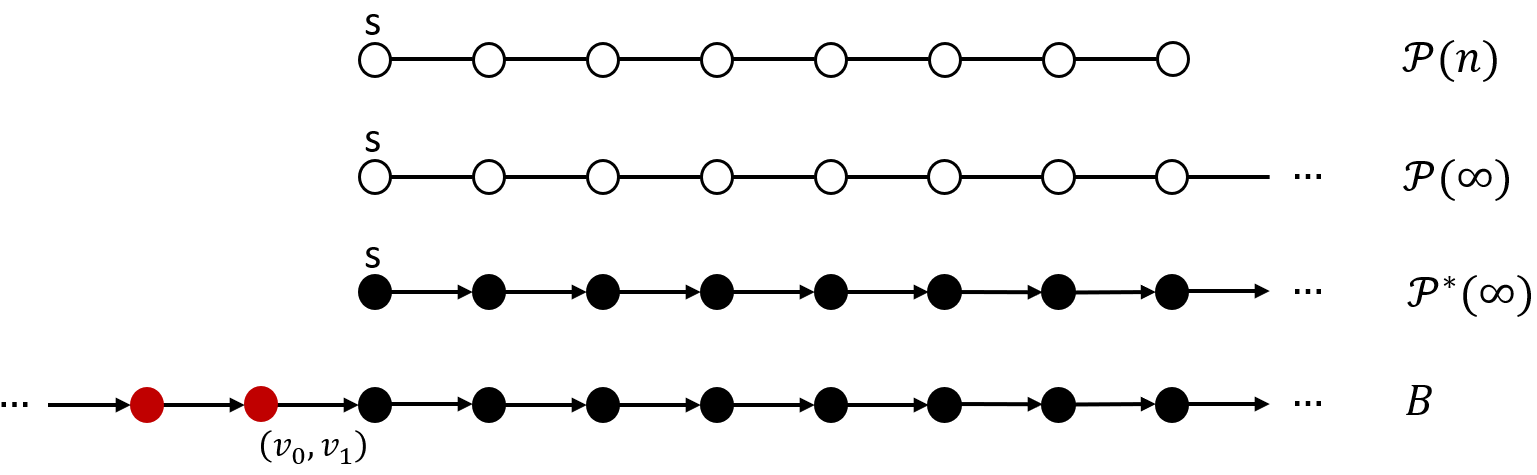}
\caption{The processes $\mathcal{P}(n), \mathcal{P}(\infty), \mathcal{P}^*(\infty), B$ that we will be interested in. White vertices are empty, black vertices contain a settled robot, and red vertices contain both a mobile and a settled robot. Edge directions indicate edge directions in $G(t)$. Note that $B$ does not have a source vertex.}
\label{fig:processes}
\vspace{-2em}
\end{figure}
  
\subsubsection{$\mathcal{P}(n)$ versus $G$}
\label{pathgraphvsGsection}

Let $n$ be the number of vertices of $G$. The path graph $\mathcal{P}(n)$ over $n$ vertices is a graph over the vertices $v_1 v_2 \ldots v_n$ such that there is an edge $(v_i, v_{i+1})$ for all $1 \leq i \leq n-1$. We simulate $S$ on the graph environment $\mathcal{P}(n)$ where the source vertex $s$ is $v_1$. Simulating $S$ on $\mathcal{P}(n)$ results in what is mostly a normal-looking execution of Algorithm \ref{alg:localrule} on $\mathcal{P}(n)$, but as discussed, it might lead to some oddities such as robots being deleted while they are still outside the graph environment.

Let us introduce some notation. $A_i^G$ refers to the copy of $A_i$ being simulated by $S$ on $G$, and $A_i^{\mathcal{P}(n)}$ is similarly defined. 

\begin{definition}
The \textit{depth} of $A_i^G$ at time $t$, written $d(A_i^G, t)$, is the number of times $A_i^G$ has \textit{successfully moved} before time $t$. Depth is initially $0$. Entering at  $s$ is considered a movement, so robots entering $s$ have depth $1$.  
\end{definition}

$d(A_i^{\mathcal{P}(n)}, t)$ is similarly defined with respect to $\mathcal{P}(n)$.

\begin{definition}
Let $T$ be a tree graph environment (such as $\mathcal{P}(n)$) with source vertex $s$. A vertex $v$ of $T$ becomes \textbf{slow} at time $t$ if a mobile robot on $v$ was activated and found no vertex it could move to, and also, either $v$ is a leaf of $T$ or all of its descendants in $T$ are slow at time $t$. 

A robot $A_i$ is \textbf{slow} at time $t$ if it is located at a slow vertex at time $t$.
\end{definition}

\begin{definition}
The \textbf{slow makespan} of $S$ on $T$, $M_{slow}^T$, is the first time all vertices of $T$ are slow when simulating the event order $S$.
\end{definition}

$G$ is not always a tree, but given a fixed event order $S$, we can associate to $S$ a spanning tree of $G$, $T_S$, containing $G(t)$ as a subtree for all times $t$. Lemma \ref{treelemma} says robots only use edges of  $T_S$, so we may define the slow makespan of $S$ on the $G$-simulation as the slow makespan on $T_S$. Slow makespan is clearly also defined for the $\mathcal{P}(n)$-simulation.  Furthermore, $M_{slow}^G$ is an upper bound on the (regular) makespan of the $G$-simulation, since every vertex must have a settled robot before it becomes slow and, as the settled robots of $G$ never move, they cannot be deleted by $S$.

Our motivation for introducing slow makespan is that we wish to show $\mathcal{P}(n)$ is the environment that maximizes slow makespan on $n$ vertices. However, it does not maximize normal makespan (see Table \ref{makespancomparetable} for an example). 

\begin{lemma}
A slow robot $A_i^G$ is forever unable to move and never deleted in the event order $S$.
\label{slowdeletelemma}
\end{lemma}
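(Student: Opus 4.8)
The plan is to prove the lemma by strong induction on the height $h$ of the slow vertex in the spanning tree $T_S$ associated with $S$, reducing the ``never deleted'' claim to the ``never moves'' claim at the very start. Since $S$ is an event order that could have occurred over $G$, it only schedules a deletion of a robot at a time when that robot is activated and \emph{attempts} to move; hence once I show that a slow robot never attempts to move, it is automatically never deleted. I will lean on three structural facts: edges of $G(t)\subseteq T_S$ point from parent to child and a settled robot marks exactly its parent, so the descent branch of Algorithm \ref{alg:localrule} from a vertex $v$ always targets a \emph{child} of $v$; a mobile robot always sits on a vertex that also holds a settled robot, so a vertex is empty precisely when it is not yet in the tree; and settled robots never move or crash, so an occupied vertex stays occupied and no vertex ever becomes empty again.

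For the base case, let $v$ be a leaf of $T_S$ that becomes slow at time $t$, witnessed by a mobile robot $R$ on $v$ that was activated and found no vertex to move to. Because $G(t')\subseteq T_S$ for all $t'$, a leaf of $T_S$ never acquires a child, so no neighbour of $v$ can ever mark $v$ and the descent branch is never enabled. Moreover $R$ found no empty neighbour at time $t$, and by the structural facts no neighbour can become empty afterwards, so the settle branch is never enabled either. Thus $R$ takes the ``stay put'' branch at every activation from $t$ onward, never attempts to move, and is never deleted.

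For the inductive step, suppose $v$ has height $h\geq 1$ and becomes slow at time $t$, witnessed by a mobile robot $R$ on $v$. By definition every descendant of $v$ in $T_S$ is slow at $t$; in particular each child $c$ of $v$ (of height $<h$) is slow, so the inductive hypothesis forces the mobile robot on $c$ to be frozen forever and never leave $c$. Consequently every child of $v$ permanently carries both a settled and a mobile robot, so the descent branch from $v$ --- which requires a child with \emph{exactly one} robot --- is never enabled; and as before the settle branch is never enabled since $R$ found no empty neighbour at $t$ and none appears later. Hence $R$ stays put at every future activation and is never deleted.

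The main obstacle is precisely what the induction is built to defeat: \emph{a priori} a child of $v$ could later free up (its mobile robot descending or settling elsewhere), re-enabling the descent branch and letting $R$ move; quantifying the hypothesis over smaller height guarantees the descendants' robots are already permanently frozen before $v$ is declared slow, closing this gap. A secondary point to pin down is that the lemma concerns \emph{any} slow robot, not just the witnessing $R$: since $R$ never leaves and a vertex admits at most one mobile robot, no other mobile robot can occupy $v$ while it is slow, so the unique mobile robot ever present on a slow vertex is the frozen $R$ (the co-located settled robot trivially never moves and is safe). Together with the ``once slow, always slow'' monotonicity that falls out of the persistence of $R$ and of the descendants' robots, this yields the claim for every slow $A_i^G$.
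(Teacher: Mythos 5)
Your proof is correct and follows essentially the same route as the paper: an induction from the leaves of $T_S$ upward, using that only attempted moves can trigger deletion, that occupied neighbours never become empty, and that frozen descendants keep both robot slots filled so the descent branch never re-enables. You spell out several steps the paper leaves implicit (the ``exactly one robot'' condition at children, and the uniqueness of the mobile robot on a slow vertex), but the underlying argument is the same.
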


\begin{proof}
Only robots attempting to move can be deleted. If $A_i^G$ is at a leaf of $T_S$, it can never move, since its parent vertex in $T_S$ contains a settled robot marking the vertex of a robot in a different location, and settled robots are never deleted. Hence, $A_i^G$ is never deleted. Slow vertices propagate upwards from the leaves of $T_S$, so the statement of the lemma follows by induction.
\end{proof}

\begin{proposition}
For any event order $S$, $M_{slow}^G \leq M_{slow}^{\mathcal{P}(n)}$. \label{slowmakespancoupling}
\end{proposition}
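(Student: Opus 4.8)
The plan is to show that the path graph is the tree which \emph{maximises} slow makespan, and to couple an arbitrary instance to it through the shared event order $S$. First I would reduce to trees. By definition the slow makespan of the $G$-simulation is $M_{slow}^{T_S}$, and by Lemma \ref{treelemma} the simulation on $G$ only ever traverses edges of $T_S$; so it suffices to treat $G$ as the rooted tree $T_S$ and prove $M_{slow}^{T}\le M_{slow}^{\mathcal P(n)}$ for an arbitrary rooted tree $T$ on $n$ vertices. Next I would pass from comparing \emph{times} to comparing \emph{progress}. All the coupled environments share the meaningful event times $t_0<t_1<\cdots$ of $S$, configurations change only at these times, and slowness is permanent (Lemma \ref{slowdeletelemma}). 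Writing $\sigma_{\mathcal E}(t)$ for the number of slow vertices of $\mathcal E$ at time $t$, each $\sigma_{\mathcal E}$ is nondecreasing and $M_{slow}^{\mathcal E}$ is the first $t_j$ with $\sigma_{\mathcal E}(t_j)=n$. Hence it is enough to establish the stepwise domination $\sigma_{T}(t)\ge\sigma_{\mathcal P(n)}(t)$ for all $t$, since then $\sigma_T$ reaches $n$ no later than $\sigma_{\mathcal P(n)}$.

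I would prove this domination by induction over the meaningful event times, running the coupled processes on $T$ and $\mathcal P(n)$ under the same $S$ (activations and deletions hit the same labelled robot in both environments; only entrances and moves differ, being governed by the local topology). The engine of the argument is the characterisation that a vertex is slow exactly when it and its whole descendant subtree are ``saturated'' --- each holding a settled robot and a stuck mobile robot --- and have been confirmed bottom-up, so that all vertices are slow precisely when two robots rest on every vertex and the confirming activations have propagated up to the root. In $\mathcal P(n)$ this structure is rigid: the settled vertices always form a prefix $v_1,\dots,v_m$ and the slow vertices always form a suffix $v_k,\dots,v_n$ that grows downward, so $\mathcal P(n)$ behaves as a single-file exclusion process in which the second robot must be ferried all the way to $v_n$ before slowness can creep back up one site at a time. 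Against this I would argue that the branching of $T$ can only help: its root empties at least as often (more children give a blocked mobile robot more escape routes), its dependency chains from root to leaf are shorter (its height is at most $n-1$, the height of the path), and robots settle at shallower frontier vertices; so each new slow vertex created in $\mathcal P(n)$ can be matched to an already-present or simultaneously-created slow vertex in $T$.

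The step I expect to be the main obstacle is that per-robot progress is \emph{not} monotone between the two environments, so a naive coupling fails. Because branching lets $T$'s source empty faster, a given robot $A_i$ can enter $T$ strictly earlier than it enters $\mathcal P(n)$ and thus be \emph{deeper} in $T$ at a given instant; conversely, once inside, the path drives robots to greater depths than a bushy $T$ does. Neither $d(A_i^{T},t)\le d(A_i^{\mathcal P(n)},t)$ nor its reverse holds in general --- which is consistent with the text's remark that $\mathcal P(n)$ does \emph{not} maximise the ordinary makespan. The real content is therefore to isolate an aggregate invariant, such as $\sigma_T\ge\sigma_{\mathcal P(n)}$, that survives these competing effects. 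I expect the cleanest route is to track the downward-propagating slow suffix of $\mathcal P(n)$ and exhibit, for each site that becomes slow there, a distinct slow vertex of $T$, exploiting that saturating and confirming all of $\mathcal P(n)$ is the worst case precisely because it forces the second robot to be delivered along the longest possible single-file chain before any upward propagation of slowness can even begin.
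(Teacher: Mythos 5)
Your setup is right---reduce to the spanning tree $T_S$, induct over the meaningful event times of the shared order $S$, and use the permanence of slowness---and you correctly diagnose the central obstacle: an unconditional per-robot depth comparison between $G$ and $\mathcal{P}(n)$ fails in both directions. But the invariant you substitute for it, the aggregate domination $\sigma_T(t)\ge\sigma_{\mathcal{P}(n)}(t)$ on counts of slow \emph{vertices}, is exactly the claim that needs proof, and your sketch gives only heuristics for it (``branching can only help,'' ``each new slow vertex of $\mathcal{P}(n)$ can be matched to one of $T$''). The inductive step does not close as stated: when the activation at time $t_m$ creates a new slow vertex in $\mathcal{P}(n)$, the same labelled robot $A_i$ may be at an unrelated position in $T$ and create nothing, so a bare vertex-count invariant carries no information about \emph{why} $T$ must already hold an extra slow vertex at that instant. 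To make the matching work you need auxiliary state relating the positions of individual robots across the two environments---which is precisely the machinery you discarded. Note also that your intermediate invariant is \emph{stronger} than what is actually needed: the conclusion only requires a comparison at the single terminal time when $\mathcal{P}(n)$ saturates.

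The paper's resolution is that the per-robot comparison is salvageable once you condition on the robot's state: it proves by induction (Lemmas \ref{statementaproof} and \ref{statementbproof}) that (a) a robot not slow or settled in $G$ satisfies $d(A_i^G,t_m)\ge d(A_i^{\mathcal{P}(n)},t_m)$, while (b) a robot that is slow or settled in $\mathcal{P}(n)$ is also slow or settled in $G$ with $d(A_i^G,t_m)\le d(A_i^{\mathcal{P}(n)},t_m)$; the two statements bootstrap each other through the induction. The conclusion then follows by counting \emph{robots}, not vertices: by Lemma \ref{neverdeleteslowcorollary} the first time $\mathcal{P}(n)$ has $2n$ slow robots is $M_{slow}^{\mathcal{P}(n)}$, statement (b) forces $2n$ slow-or-settled robots in $G$ at that time, and capacity constraints (at most one settled robot per vertex) turn ``$2n$ slow or settled'' into ``all $n$ vertices slow.'' Your proposal is missing this conditional per-robot invariant or any workable substitute for it, so the core of the argument is absent.
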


An intuitive argument for this proposition is that if the spanning tree $T_S$ of $G$ is not $\mathcal{P}(n)$, then some vertex $v$ of $T_S$ must have multiple descendants, hence robots entering $v$ will be able to branch to different neighbours and $v$ is less likely to be blocked. Consequently, robots will enter $G$ faster than $\mathcal{P}(n)$, and so  $M_{slow}^G \leq M_{slow}^{\mathcal{P}(n)}$. We need to formalize this intuition into an argument that holds for any event order $S$. It turns out there are many subtleties involving asynchronicity, settling and crashing which make this not straightforward, and we require a rather technical argument. (Such subtleties are also why it is simpler to compare the environments $G, \mathcal{P}(n)$, $\mathcal{P}(\infty)$, $\mathcal{P^*}(\infty), B$ rather than compare $G$ to $B$ directly.)

We prove Proposition \ref{slowmakespancoupling} by induction on the \textit{meaningful event times} $t_0, t_1, \ldots$ in the event order $S$. We show the following statements to be true for non-deleted robots at all times $t_m$:

\begin{enumerate}[label=(\alph*)]
    \item If $A_i^G$ is not slow or settled, then $d(A_i^G, t_m) \geq d(A_i^{\mathcal{P}(n)}, t_m)$. 
    
    \item If $A_i^{\mathcal{P}(n)}$ is slow or settled, then $A_i^G(t_m)$ is slow or settled, and $d(A_i^G, t_m) \leq d(A_i^{\mathcal{P}(n)}, t_m)$.
\end{enumerate}

We note that both statements are (trivially) true at time $t_0$, as no event has occurred yet. 

\begin{lemma}
If statement (b) is true up to time $t_m$, settled and slow robots of $\mathcal{P}(n)$ neither move nor get deleted as a result of an event of $S$ scheduled for time $t_m$ (i.e., the robots still exist and are in the same place at time $t_{m+1}$).

\label{neverdeleteslowcorollary}
\end{lemma}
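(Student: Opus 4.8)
The plan is to reduce the statement to a claim about the \emph{single} robot activated at time $t_m$. By the definition of meaningful event times exactly one robot $A$ is scheduled for activation at $t_m$, and by the model only an activated robot that attempts to move can move or be deleted. Hence no robot other than $A$ changes location or is removed as a consequence of the event at $t_m$, so every settled or slow robot distinct from $A$ is trivially unaffected, and it suffices to prove that \emph{if $A^{\mathcal{P}(n)}$ is slow or settled then it neither moves nor is deleted at $t_m$}.

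For the movement half I would split on the state of $A^{\mathcal{P}(n)}$. If it is settled it is stationary by definition of the settled state, so it does not move. If it is slow, the reasoning behind Lemma \ref{slowdeletelemma} applies verbatim to the tree environment $\mathcal{P}(n)$: a slow vertex $v_i$ has its unique child $v_{i+1}$ already slow (or is the leaf $v_n$), so $v_{i+1}$ carries both a settled and a slow mobile robot and offers neither an empty slot nor a single-occupant target, while the parent $v_{i-1}$ marks its own predecessor rather than $v_i$. Thus neither branch of Algorithm \ref{alg:localrule} is triggered and $A^{\mathcal{P}(n)}$ stays put. This half does not need hypothesis (b) at all; it is purely a structural fact about the path.

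For the deletion half the key observation is that deletions in the $\mathcal{P}(n)$-simulation are not generated natively but \emph{inherited from $G$} through the coupling: $A^{\mathcal{P}(n)}$ is deleted at $t_m$ exactly when $S$ records a deletion of $A$ at $t_m$, which happens only if $A^G$ actually crashed in $G$ while attempting to move. Now suppose $A^{\mathcal{P}(n)}$ is slow or settled. Statement (b), assumed valid up to $t_m$, forces $A^G$ to be slow or settled as well. A settled robot of $G$ never attempts to move, and by Lemma \ref{slowdeletelemma} a slow robot of $G$ is never deleted; in either case $A^G$ is not deleted by the event at $t_m$, so $S$ records no deletion of $A$ there, and therefore $A^{\mathcal{P}(n)}$ is not deleted either.

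I expect this deletion step to be the only real obstacle. One cannot simply quote Lemma \ref{slowdeletelemma} for $\mathcal{P}(n)$, because that lemma's ``never deleted'' conclusion rests on deletions being self-generated (a robot is removed only when it itself tries to move), whereas in the simulation deletions on $\mathcal{P}(n)$ are imposed externally by $S$. The only way to exclude such an imposed deletion is to transfer the property back to $G$, and this transfer is precisely what the inductive hypothesis (b) supplies by coupling the slow/settled status of $A^{\mathcal{P}(n)}$ to that of $A^G$. Everything else is a routine case check.
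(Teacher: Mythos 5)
Your deletion argument is exactly the paper's: use (b) to transfer the slow/settled status of $A^{\mathcal{P}(n)}$ to $A^G$, note that $S$ obeys the model on $G$ so it never schedules a deletion of a settled robot of $G$, invoke Lemma \ref{slowdeletelemma} for slow robots of $G$, and conclude that $S$ records no deletion of $A$. That half is correct, and your remark about why Lemma \ref{slowdeletelemma} cannot be quoted directly for $\mathcal{P}(n)$ is precisely the right diagnosis.

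The gap is in the movement half, where you claim the argument ``does not need hypothesis (b) at all'' because it is ``purely a structural fact about the path.'' It is not. Your structural argument rests on the assertion that the slow child $v_{i+1}$ \emph{still} carries both a settled and a mobile robot at time $t_m$. But the mobile robot at $v_{i+1}$ could in principle have been deleted at some earlier meaningful event time by an externally imposed deletion from $S$ (the very phenomenon you flag in the deletion half). If that happened, $v_{i+1}$ would contain exactly one robot, a settled robot that marks $v_i$, and the first branch of Algorithm \ref{alg:localrule} would then let the slow robot at $v_i$ move forward. So the persistence of the blocking configuration downstream is exactly where hypothesis (b) enters: the paper uses the fact that (b) holds at \emph{all} times up to $t_m$ to conclude, by the same no-deletion argument you gave for $A$ itself, that every robot of $\mathcal{P}(n)$ that became slow or settled at or before $t_m$ has survived, and only then does the argument of Lemma \ref{slowdeletelemma} transfer to the path. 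The repair is cheap --- apply your own deletion argument inductively to all slow and settled robots of $\mathcal{P}(n)$ at every earlier meaningful event time, not just to the robot activated at $t_m$ --- but as written the movement half asserts a false independence from (b).
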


Assuming (a) and (b) hold at all times, let us see how to infer Proposition \ref{slowmakespancoupling}. If a vertex becomes slow at some time $t$, it must contain a settled and a mobile robot, both of whom become  slow. Lemma \ref{neverdeleteslowcorollary} says that slow and settled robots of $\mathcal{P}(n)$ never get deleted. Hence, the first time there are $2n$ slow robots on $\mathcal{P}(n)$ (two at every vertex) is $M_{slow}^{\mathcal{P}(n)}$. Statement (b) implies that if $\mathcal{P}(n)$ has $2n$ slow robots, $G$ must also contain $2n$ slow or settled robots. It is immediate to verify that this can only happen when $G$ has $2n$ slow robots. Hence, at time $M_{slow}^{\mathcal{P}(n)}$, $G$ has $2n$ slow robots--two at every vertex. The inequality  $M_{slow}^{\mathcal{P}(n)} \geq M_{slow}^G$ follows by definition.  \qed

\begin{lemma} 
If statements (a) and (b) are true up to time $t_{m}$, statement (a) is true at time $t_{m+1}$.
\label{statementaproof}
\end{lemma}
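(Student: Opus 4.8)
My plan is to localize the change. Between two consecutive meaningful event times $t_m$ and $t_{m+1}$ only the single robot $A$ scheduled for activation at $t_m$ can alter its depth or its slow/settled status: a robot becomes settled only by moving (hence only when activated), and by the definition of slowness a vertex turns slow only when \emph{its own} mobile robot is activated and finds nothing to move to. Consequently every robot other than $A$ has the same depth and the same slow/settled status at $t_{m+1}$ as at $t_m$, so for those robots statement (a) at $t_{m+1}$ is \emph{immediate} from statement (a) at $t_m$. It therefore suffices to verify (a) for $A$ itself. If the event at $t_m$ deletes $A$, then by the coupling its copies are deleted in \emph{both} $G$ and $\mathcal{P}(n)$ and (a) is vacuous for it, so I may assume $A$ survives.

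Writing $\delta^G,\delta^{P}\in\{0,1\}$ for whether $A$ advances one unit of depth in $G$ and in $\mathcal{P}(n)$ at $t_m$, I first observe that if $A^G$ is not slow or settled at $t_{m+1}$ then it was not slow or settled at $t_m$ either, since both properties are permanent (using Lemma~\ref{slowdeletelemma} and the irreversibility of settling). Hence the inductive hypothesis (a) gives $d(A^G,t_m)\ge d(A^{\mathcal{P}(n)},t_m)$. Combining this with the increments disposes of all but one case: if $\delta^{P}=0$ the inequality is preserved since $\delta^G\ge 0$, and if $\delta^G=1$ it is preserved since $d(A^G,t_m)\ge d(A^{\mathcal{P}(n)},t_m)$. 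The only remaining case is $\delta^{P}=1$, $\delta^G=0$ -- $A$ moves in the path but is blocked, without turning slow, in $G$ -- and there I must upgrade the inductive inequality to the \emph{strict} bound $d(A^G,t_m)> d(A^{\mathcal{P}(n)},t_m)$.

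To attack this case I would exploit two structural facts: mobile robots always sit on settled vertices and only ever descend, so depth equals tree-depth in $T_S$; and $\mathcal{P}(n)$ fills depth-by-depth. Suppose for contradiction the depths are equal, say both $k$, and let $w$ be the depth-$k$ vertex occupied by $A^G$. Since $A^G$ is blocked but $w$ does not turn slow, the slow definition forces a non-slow descendant of $w$, hence a non-slow child $c$ of $w$; because $A^G$ can neither settle at $c$ (so $c$ is settled, not empty) nor descend into it (so $c$ already holds a mobile robot), $c$ carries a mobile robot $B$ that is not slow or settled and sits at depth $k+1$ in $G$. On the other side, $A^{\mathcal{P}(n)}$ moving from $v_k$ means depth $k+1$ of the path has room (either $v_{k+1}$ is empty, or it is settled with a free mobile slot). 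The plan is to contradict this by transporting the depth-$(k{+}1)$ occupancy of $G$ into $\mathcal{P}(n)$: applying statement (a) to $B$, and the contrapositive of statement (b) (a robot not slow or settled in $G$ is not slow or settled in the path) to the robots $G$ has pushed past depth $k$, together with vertex capacity and the rigid filling of $\mathcal{P}(n)$, to conclude that $v_{k+1}$ is in fact occupied with no free slot -- contradicting that $A^{\mathcal{P}(n)}$ can move -- so the depths could not have been equal.

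I expect this transport step to be the crux and the main obstacle. The difficulty is exactly the asynchronous, crash-ridden bookkeeping flagged after Proposition~\ref{slowmakespancoupling}: the coupling matches robots only by identity and not by position, $G$ branches while $\mathcal{P}(n)$ does not, and robots may have entered the two environments in different orders, so one cannot simply read off where $B$ (or any particular robot) sits in the path. Carrying the argument through will require treating the ``settle'' and ``descend'' subcases for $A^{\mathcal{P}(n)}$ separately and reasoning about the \emph{aggregate} occupancy at depth $k+1$ across the coupled environments rather than about any single robot.
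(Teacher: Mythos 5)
Your reduction is sound and matches the paper's: only the activated robot matters, deletions are trivial, slowness and settledness are permanent, and the only dangerous case is $\delta^{P}=1$, $\delta^{G}=0$ with $A^{G}$ not turning slow. Your identification of the witness is also exactly the paper's: a non-slow child $c$ of $w$ must exist, it must carry a non-slow, non-settled mobile robot $B$ at depth $k+1$, and statement (a) plus the contrapositive of (b) apply to $B$. But the proof stops there: you explicitly defer the ``transport step'' and propose to replace it with an aggregate-occupancy count over depth $k+1$, which you do not carry out. As written, this is a genuine gap, and the difficulty you flag (``one cannot simply read off where $B$ sits in the path'') is precisely the thing the paper \emph{can} do, by a single-robot argument rather than an aggregate one.

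The missing ingredient is the no-passing property of $\mathcal{P}(n)$ combined with the entry order. Write $B=A_j$. Since $B$ sits at a child of $w$ in $T_S$ and $A_i$ sits at $w$, both traversed the unique root-to-$w$ path and neither can overtake the other, so $B$ entered $s$ first and $j<i$; by the entrance protocol the same order holds in $\mathcal{P}(n)$. Hence $B^{\mathcal{P}(n)}$, which by the contrapositive of (b) is mobile, lies strictly ahead of $A_i^{\mathcal{P}(n)}$ on the path, giving $d(B^{\mathcal{P}(n)})\geq d(A_i^{\mathcal{P}(n)})+1=k+1$. Statement (a) applied to $B$ gives $d(B^{\mathcal{P}(n)})\leq d(B^{G})=k+1$. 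These pin $B^{\mathcal{P}(n)}$ as a mobile robot exactly at $v_{k+1}$, which blocks $A_i^{\mathcal{P}(n)}$ from advancing out of $v_k$ --- contradicting $\delta^{P}=1$. (Equivalently, the paper chains $d(A_i^G)+1=d(A_j^G)\geq d(A_j^{\mathcal{P}(n)})\geq d(A_i^{\mathcal{P}(n)})+1$ directly, without a contradiction.) Your case split on whether $A^{\mathcal{P}(n)}$ settles or merely descends is unnecessary once this is in place, and the aggregate count you propose would still need the same order-preservation fact to relate occupants of depth $k+1$ in $G$ to positions in $\mathcal{P}(n)$, so it does not avoid the obstacle.
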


\begin{lemma}
If statements (a) and (b) are true up to time $t_m$, statement (b) is true at time $t_{m+1}$. 
\label{statementbproof}
\end{lemma}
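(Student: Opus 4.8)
The plan is to run the same induction on the meaningful event times and to exploit that at a single meaningful time $t_m$ only the one robot $A_j$ scheduled by $S$ can alter its depth or its slow/settled status in either environment. I would first record two elementary structural facts about $\mathcal{P}(n)$ on which the whole argument leans: a robot there only ever moves forward, so its depth equals its current vertex index; the settled vertices always form a prefix $v_1,\ldots,v_k$; and consequently a robot settles only by stepping from the frontier $v_k$ onto an empty $v_{k+1}$, while a vertex $v_d$ can become slow only once $v_d,\ldots,v_n$ are all doubly occupied (so that, in fact, every vertex of $\mathcal{P}(n)$ already carries a settled robot). Since Lemma \ref{statementaproof} is proved first, I may also assume statement (a) already holds at $t_{m+1}$.

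Next I would discharge the easy obligations. For every $A_i$ with $i \neq j$ the depth is unchanged in both environments; a settled robot stays settled, a slow robot of $G$ stays slow by Lemma \ref{slowdeletelemma}, and—because becoming newly slow or newly settled requires the robot itself to be activated—no such $A_i$ changes status between $t_m$ and $t_{m+1}$, so (b) for these robots is inherited verbatim from $t_m$. If $S$ schedules a deletion at $t_m$, Lemma \ref{neverdeleteslowcorollary} guarantees the victim $A_j^{\mathcal{P}(n)}$ was neither slow nor settled, so once deleted it satisfies the hypothesis of (b) for no robot and the claim is vacuous. It remains to treat the activation of a surviving $A_j$, and for (b) only three sub-cases are relevant: $A_j^{\mathcal{P}(n)}$ was already slow or settled, $A_j^{\mathcal{P}(n)}$ newly settles, or $A_j^{\mathcal{P}(n)}$ newly becomes slow. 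The first is routine: by the induction hypothesis (b), $A_j^G$ was already slow or settled with $d(A_j^G,t_m)\leq d(A_j^{\mathcal{P}(n)},t_m)$, and neither robot moves or is deleted under this activation (Lemmas \ref{slowdeletelemma} and \ref{neverdeleteslowcorollary}), so both conclusions of (b) persist unchanged.

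The substance of the lemma is the two transition sub-cases, where $A_j^{\mathcal{P}(n)}$ stops for the first time at depth $d$ (settling at $v_d$, or becoming slow at $v_d$). Here I must show $A_j^G$ is slow or settled at $t_{m+1}$ and that $d(A_j^G,t_{m+1}) \leq d$. The intended mechanism is that, because $\mathcal{P}(n)$ is the maximally constricted ``funnel'', the copy $A_j^G$ has in fact already come to rest in $G$ no later—and no deeper—than its copy stops in $\mathcal{P}(n)$: in $G$ the robot meets either a settling opportunity (an empty neighbour reached by branching) or a dead end at a shallower depth, exactly as foreshadowed by the branching intuition preceding Proposition \ref{slowmakespancoupling}. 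Concretely, I would establish that $A_j^G$ cannot still be mobile-and-non-slow at $t_{m+1}$: were it so, statement (a) at $t_{m+1}$ would give $d(A_j^G,t_{m+1}) \geq d$, whereas the prefix structure (for settling) or the fullness of $\mathcal{P}(n)$ (for becoming slow) together with the induction hypotheses at $t_m$ force that every vertex $A_j^G$ could reach up to depth $d$ is already occupied in $G$ and that its descendants in $T_S$ are slow, leaving $A_j^G$ blocked upon activation and hence slow or settled. The depth bound $d(A_j^G,t_{m+1})\leq d$ would then follow because such a robot stalls in $G$ at the first blocked vertex it encounters, whose depth is controlled by $d(A_j^{\mathcal{P}(n)},\cdot)$ through (a) and (b) at $t_m$.

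The main obstacle I anticipate is precisely this transition step. Statements (a) and (b) assert opposite depth inequalities, so reconciling them exactly at the instant $A_j^{\mathcal{P}(n)}$ stops—while simultaneously ruling out that $A_j^G$ is a deep, still-wandering robot—requires genuinely coupling the two trajectories rather than merely comparing their depths. Asynchronicity and crashing aggravate this, since $A_j$'s activation histories in the two environments need not agree step-for-step and an adversarial deletion may have reshaped one environment's frontier but not the other's. I expect the bulk of the work, and the real need for the prefix and fullness observations, to live here, with the remaining bookkeeping being routine once the transition case is settled.
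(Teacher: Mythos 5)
Your skeleton matches the paper's: you dispose of non-activated robots, deletions, and already slow/settled robots the same way, and you correctly isolate the two transition sub-cases (the activated robot newly settles in $\mathcal{P}(n)$, or newly becomes slow there) as the substance of the lemma, to be resolved by playing statement (a) at $t_{m+1}$ against an upper bound on $d(A_i^G,t_{m+1})$. The problem is that you stop exactly where the proof begins: the upper bound is asserted rather than derived, and the claim you offer in its place --- that ``every vertex $A_i^G$ could reach up to depth $d$ is already occupied in $G$'' --- is neither justified by anything you set up nor the invariant the argument actually needs. You yourself flag that ``the bulk of the work'' lives in this step; that work is the lemma.

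The paper closes the gap with two concrete arguments absent from your proposal. For the newly-settled case, it invokes the algorithm-specific fact that a mobile robot in $G$ always neighbours (or shares a vertex with) a settled robot that entered \emph{earlier}, so $d(A_i^G,t_{m+1})\le \max_{j<i} d(A_j^G,t_{m+1})+1$; applying (b) to those earlier settled robots and using the prefix structure of $\mathcal{P}(n)$ yields $d(A_i^G,t_{m+1})\le d(A_i^{\mathcal{P}(n)},t_{m+1})$, after which (a) at $t_{m+1}$ forces $A_i^G$ to be slow or settled. For the newly-slow case it is a counting argument: if $d(A_i^{\mathcal{P}(n)},t_m)=k$, then $\mathcal{P}(n)$ contains $2n-k$ slow or settled robots, all of which entered before $A_i$; statement (b) transfers at least $2n-k$ such robots to $G$, of which at least $n-k$ are slow \emph{mobile} robots that are deeper than $A_i^G$ or on other branches of $T_S$, so $A_i^G$ can have visited at most $k$ vertices, giving $d(A_i^G,t_m)\le k$; and those same $n-k$ robots occupy every descendant branch of $A_i^G$, which is what makes it slow upon activation. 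Neither the ``always neighbours an earlier settled robot'' fact nor the $2n-k$ count appears in your plan, so the transition step --- and with it the lemma --- remains unproved.
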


\subsubsection{$\mathcal{P}(n)$ versus $\mathcal{P}(\infty)$}
\label{pathgraphsection}

We wish to bound $M_{slow}^{\mathcal{P}(n)}$ (which is determined by the event order $S$). We do this by comparing simulations of $S$ on different environments. To start, let $\mathcal{P}(\infty)$ be the path graph with infinite vertices, and where $s = v_1$. We may simulate $S$ on $\mathcal{P}(\infty)$ as we did on  $\mathcal{P}(n)$. 

\begin{lemma}
For any event order $S$ simulated on $\mathcal{P}(n)$ and $\mathcal{P}(\infty)$ and any time $t < M_{slow}^{\mathcal{P}(n)}$, $\mathcal{P}(n)$ and $\mathcal{P}(\infty)$ contain the exact same number of robots.
\label{Pinftylemma}
\end{lemma}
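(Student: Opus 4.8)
The plan is to run the same event order $S$ on both environments as a single coupled process and to control the only two mechanisms that can change a robot count: an entrance at the source $v_1$ (which is possible exactly when $v_1$ holds no mobile robot) and a crash dictated by $S$. Since $S$ deletes the copy of $A_i$ in every coupled environment simultaneously, crashes keep the two counts equal provided the same robots are present in both; hence the counts can diverge only through an entrance that fires in one environment but not the other. Rather than track only $v_1$, I would prove a stronger invariant by induction on the meaningful event times $t_0 < t_1 < \cdots$, namely that up to $M_{slow}^{\mathcal{P}(n)}$ the two environments have the identical set of present robots and agree in configuration on the non-slow vertices of $\mathcal{P}(n)$. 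Equal robot sets give equal counts immediately, and agreement at $v_1$ is what forces entrances to stay synchronized.

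The structural ingredient I would establish first is that, because slowness propagates from the leaf $v_n$ backward and is descendants-closed, the slow vertices of $\mathcal{P}(n)$ always form a suffix $\{v_{j+1},\dots,v_n\}$, so the non-slow vertices are a prefix $v_1,\dots,v_j$. Moreover $v_1$ can be slow only when every vertex is, which happens precisely at $M_{slow}^{\mathcal{P}(n)}$; thus for every $t < M_{slow}^{\mathcal{P}(n)}$ this prefix is nonempty and contains $v_1$. The invariant holds trivially at $t_0$, when both environments are empty.

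For the inductive step I would case on the single robot $A$ activated (or deleted) at the event. If $A$ enters at $v_1$, the occupancy of $v_1$ is shared, so the entrance occurs in both or in neither and the invariant is preserved. If $A$ is strictly inside the prefix, its decision depends only on vertices of index at most $j$, which agree, so it acts identically in both environments. If $A$ sits at the boundary $v_j$, it may advance in $\mathcal{P}(\infty)$ while being blocked in $\mathcal{P}(n)$ because its forward neighbour $v_{j+1}$ is full there; but this is exactly the event that turns $v_j$ slow in $\mathcal{P}(n)$, shrinking the prefix to $v_1,\dots,v_{j-1}$ while leaving those vertices and the robot set untouched. If $A$ lies in the slow region it cannot move in $\mathcal{P}(n)$ (slow robots never move, cf.\ Lemma \ref{slowdeletelemma}, whose argument applies to the tree $\mathcal{P}(n)$) and at worst settles further out in $\mathcal{P}(\infty)$, affecting neither the prefix nor the count. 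A crash removes the same $A_i$ from both; since the robot sets agree, $A_i$ occupies the same prefix vertex in both, or lies in neither prefix, so set equality and prefix agreement survive. Combining the cases, whenever $v_1$ is non-slow the entrances stay synchronized and the counts remain equal.

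The main obstacle is the boundary case: I must argue that the set of vertices on which the two configurations can disagree is confined to the slow suffix of $\mathcal{P}(n)$, so that this disagreement reaches $v_1$ only when the last vertex turns slow. Pinning down this coincidence --- that a robot first becomes permanently stuck at $v_j$ in $\mathcal{P}(n)$ (instead of advancing as in $\mathcal{P}(\infty)$) exactly when $v_j$ becomes slow, and that the slow boundary, hence the disagreement, marches back to $v_1$ only at $M_{slow}^{\mathcal{P}(n)}$ --- is the crux; once it is in place, synchronized entrances up to that time yield the claimed equality of robot counts.
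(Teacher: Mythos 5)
Your proposal is correct and matches the paper's argument: both confine any disagreement between $\mathcal{P}(n)$ and $\mathcal{P}(\infty)$ to the slow suffix of $\mathcal{P}(n)$, observe that this suffix grows backward from $v_n$ and reaches $v_1$ only at $M_{slow}^{\mathcal{P}(n)}$, and conclude that entrances (and hence robot counts, since deletions are coupled) stay synchronized until then. The paper phrases this as a backward induction on the vertex index $k$ (``the first $k$ vertices agree until $v_k$ becomes slow,'' invoking Lemma \ref{neverdeleteslowcorollary} to keep the slow robots at $v_{k+1}$ in place), whereas you run an explicit induction on meaningful event times with the prefix-agreement invariant; these are the same argument in different clothing.
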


\begin{proof}
The configuration of robots in the first $n$ vertices of  $\mathcal{P}(n)$ and $\mathcal{P}(\infty)$ is identical until $v_n$ becomes slow in $\mathcal{P}(n)$. After $v_n$ becomes slow, the configuration of robots in the first $n-1$ vertices is still the same in both graphs until a robot in $v_{n-1}$ is prevented from moving by a robot in $v_n$, meaning $v_{n-1}$ becomes slow. By induction, the configuration of robots in the first $k$ vertices of both graphs is identical until $v_k$ in $\mathcal{P}(n)$ becomes slow (we use Lemma \ref{neverdeleteslowcorollary} to infer that the slow robots at $v_{k+1}$ are never deleted). Hence, until $v_1$ becomes slow, robots enter at the same times in $\mathcal{P}(n)$ and $\mathcal{P}(\infty)$. $v_1$ becomes slow precisely at time  $M_{slow}^{\mathcal{P}(n)}$.
\end{proof}

\subsubsection{$\mathcal{P}(\infty)$ versus $\mathcal{P}^*(\infty)$} 
We simulate $S$ on the environment $\mathcal{P}^*(\infty)$. $\mathcal{P}^*(\infty)$ is $\mathcal{P}(\infty)$ with the modification that there is at time $t=0$ a settled robot at every vertex $v_i$. The settled robot at $v_i$ marks $v_{i-1}$. These ``dummy'' robots are never activated, and are not of the indexed robots $A_1, A_2, \ldots$. Because there is already a settled robot at every vertex, the robots $A_1, A_2, \ldots$ never become settled. Call this environment $\mathcal{P}^*(\infty)$. Lemma \ref{Pstarinftyslowerlemma} shows $\mathcal{P}^*(\infty)$ is strictly slower than $\mathcal{P}(\infty)$:

\begin{lemma}
For any event order $S$ and at any time $t$, the amount of \textbf{mobile}-state robots in  $\mathcal{P}^*(\infty)$ at time $t$ is at most the total amount of robots in $\mathcal{P}(\infty)$.
\label{Pstarinftyslowerlemma}
\end{lemma}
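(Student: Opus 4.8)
The plan is to couple the simulations of the common event order $S$ on $\mathcal{P}(\infty)$ and $\mathcal{P}^*(\infty)$ and to argue by induction on the \emph{meaningful event times} $t_0 < t_1 < \cdots$. Since the robot configuration in any simulated environment cannot change between consecutive meaningful event times, it suffices to control the single activation (together with a possible deletion) scheduled at each $t_m$: exactly one labelled robot $A_i$ acts in \emph{both} environments, according to Algorithm \ref{alg:localrule} applied to its local neighbourhood, after which any deletion scheduled by $S$ removes $A_i$ from both environments in which it is present. Writing $N(t)$ for the total number of robots (settled plus mobile) in $\mathcal{P}(\infty)$ and $N^*(t)$ for the number of mobile robots in $\mathcal{P}^*(\infty)$ at time $t$, the goal is an invariant that yields $N^*(t_m)\le N(t_m)$ at every step.

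The invariant I would try to maintain is a \emph{positional domination}: for each prefix $\{v_1,\dots,v_i\}$ let $R_i(t)$ and $R_i^*(t)$ count the robots of $\mathcal{P}(\infty)$ and the mobile robots of $\mathcal{P}^*(\infty)$ lying in that prefix (the dummies are not counted), and aim to keep $R_i^*(t_m)\le R_i(t_m)$ for all $i$. The intuition is that $\mathcal{P}^*(\infty)$ never settles, so its robots are pushed strictly forward, while $\mathcal{P}(\infty)$ ``spends'' its first arriving robot as a permanent, crash-immune root at $v_1$ and accumulates settled robots behind a compact frontier; thus $\mathcal{P}(\infty)$ holds at least as many robots in every prefix. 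Since only finitely many robots are present at any time, taking $i$ beyond the furthest occupied vertex in both environments collapses the invariant to $N^*(t_m)\le N(t_m)$, which is exactly the claim, as every robot in $\mathcal{P}^*(\infty)$ is mobile.

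The inductive step is a case analysis on the action of $A_i$ at $t_m$: a successful entrance at $v_1$ raises every $R_i$ by one; a forward move or a settling step (both of which leave the current vertex $v_a$ for $v_{a+1}$) lowers $R_a$ by one; a crash at $v_a$ lowers $R_j$ for all $j\ge a$; and a ``stay put'' changes nothing. I would first record the key asymmetry driving the whole comparison: in $\mathcal{P}(\infty)$ the very first entrant settles at $v_1$ in a single activation, so $v_1$ is freed of mobile robots immediately, whereas in $\mathcal{P}^*(\infty)$ that robot stays mobile and must be reactivated to vacate $v_1$. This ``one-robot head start'' is what I would try to propagate inductively, so that whenever $\mathcal{P}^*(\infty)$ admits an entrant, $\mathcal{P}(\infty)$ either admits the coupled entrant as well or already enjoys strict slack $R_i^* < R_i$ to absorb the increment.

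The main obstacle is that the plain prefix domination is not obviously preserved by two event types, both stemming from the fact that deletions are synchronised on labels while $A_i$ may occupy \emph{different} positions (or have different states) in the two environments. First, entrances: $\mathcal{P}(\infty)$ may have a mobile robot stuck at $v_1$ exactly when $\mathcal{P}^*(\infty)$ does not, so $\mathcal{P}^*(\infty)$ could accept an entrant that $\mathcal{P}(\infty)$ rejects, threatening a tight prefix. Second, crashes: a deletion of $A_i$ removes it from $\mathcal{P}(\infty)$ at a vertex that may lie \emph{behind} its copy in $\mathcal{P}^*(\infty)$, so some prefixes lose a robot in $\mathcal{P}(\infty)$ but not in $\mathcal{P}^*(\infty)$. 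The crux of the argument is therefore to show that both situations can only occur when strict slack is already present. I expect this to require strengthening the invariant beyond a bare prefix count into a position-respecting injection of $\mathcal{P}^*(\infty)$'s mobile robots into $\mathcal{P}(\infty)$'s robots, and crucially to exploit that settled robots in $\mathcal{P}(\infty)$ are immune to deletion, so every ``dangerous'' crash in $\mathcal{P}(\infty)$ falls on a mobile robot that the head start guarantees is matched by available slack; the permanent root supplies the extra unit that keeps the inequality from ever becoming tight in the wrong place.
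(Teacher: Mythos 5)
Your high-level strategy (couple the two simulations of $S$ and induct on the meaningful event times) is the paper's strategy, but the invariant you pick does not carry the induction, and the place where you say the "crux" lies is exactly where the argument is missing. The prefix-domination claim $R_i^*(t)\le R_i(t)$ is not a strengthening of the lemma for these environments: in $\mathcal{P}(\infty)$ the settled vertices always form a prefix $v_1,\dots,v_f$ and every robot (settled or mobile) lives inside that prefix, so $R_i\ge i\ge R_i^*$ holds trivially for $i\le f$, while for $i\ge f$ one has $R_i=N$ (the total count) and the claim degenerates to $R_i^*\le N$, i.e.\ to the lemma itself. Equivalently, by Hall's theorem the "position-respecting injection" you hope to build (image at an index no larger than the source) exists if and only if prefix domination holds, so invoking it is circular. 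The invariant therefore gives you no leverage in precisely the two cases you flag — an entrance accepted by $\mathcal{P}^*(\infty)$ but refused by $\mathcal{P}(\infty)$, and a deletion landing at different positions in the two environments — and the proposal stops short of resolving them.

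The paper's fix is a per-label invariant pointing in the \emph{opposite} positional direction from the one you want: for every non-deleted $A_i$, either $A_i^{\mathcal{P}(\infty)}$ is settled or $d(A_i^*,t_m)\le d(A_i^{\mathcal{P}(\infty)},t_m)$; that is, each robot's copy in $\mathcal{P}^*(\infty)$ \emph{lags} its copy in $\mathcal{P}(\infty)$. This yields the count bound not through prefixes but through entry: if $A_i^*$ has entered (depth $\ge 1$), its $\mathcal{P}(\infty)$ copy has entered too, and since deletions are synchronized by label the labels present in $\mathcal{P}^*(\infty)$ form a subset of those present in $\mathcal{P}(\infty)$. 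Deletions then cause no trouble at all (the per-label statement is vacuous for deleted robots), and the only nontrivial inductive case is a tie $d(A_i^*,t_m)=d(A_i^{\mathcal{P}(\infty)},t_m)$ in which $A_i^{\mathcal{P}(\infty)}$ is blocked by some mobile $A_j^{\mathcal{P}(\infty)}$ with $j<i$; the inductive hypothesis plus the fact that on a path $A_j^*$ is always strictly ahead of $A_i^*$ forces $A_i^*$ to be blocked as well. If you want to salvage your writeup, replace the prefix counts with this per-label depth comparison; your "one-robot head start" intuition is correct but is captured by the lag inequality, not by prefix domination.
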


\subsubsection{$\mathcal{P}^*(\infty)$ versus totally asymmetric simple exclusion}

We bound the arrival rate of robots at $\mathcal{P}^*(\infty)$ by another, even slower process. This process, $B$, takes place on the path graph $\mathcal{P}(\infty)$ where we also have non-positive vertices $v_0, v_{-1}, v_{-2}, \ldots$, and such that there is an edge $(v_i, v_{i+1})$ for every $i$. Like $\mathcal{P}^*(\infty)$ there is initially a settled robot at every vertex, marking the vertex before it. Unlike the other processes, robots do not enter at $s$: the robot $A_{i}$ begins inside the graph environment as a mobile robot located at $v_{-i+1}$. To compare $B$ with $\mathcal{P}^*(\infty)$, we count the robots that cross the edge $(v_0, v_1)$. There is one more crucial feature of $B$: robots are never deleted from $B$. Scheduled robot deletions at $S$ are treated as a regular activation of the robot. Besides these differences, $S$ can be simulated on $B$ as before.

\begin{lemma}
For any event order $S$ and at any time $t$, the number of mobile robots that crossed the $(v_0, v_1)$ edge of $B$ is at most the number of robots that entered or were deleted before entering $\mathcal{P}^*(\infty)$.
\label{TASEPboundPstarlemma}
\end{lemma}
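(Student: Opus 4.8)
The plan is to prove the following stronger, index-wise statement, call it the \emph{crossing invariant} $(\dagger)$: for every $k$ and every time $t$, if the copy $A_k^{B}$ has crossed $(v_0,v_1)$ by time $t$, then the copy $A_k^{\mathcal{P}^*(\infty)}$ has already \emph{resolved} by $t$, meaning it has either entered $v_1$ or been deleted before entering. Write $c_B(t)$ for the number of robots that have crossed $(v_0,v_1)$ in $B$ by time $t$. Since no robot is ever deleted in $B$ and robots cannot overtake on a path, the robots cross $(v_0,v_1)$ in index order, so the set of crossed indices at time $t$ is exactly $\{1,\dots,c_B(t)\}$. Statement $(\dagger)$ exhibits this set as a subset of the set of indices resolved in $\mathcal{P}^*(\infty)$, whose cardinality is precisely the number of robots that entered or were deleted before entering $\mathcal{P}^*(\infty)$. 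Hence $(\dagger)$ yields the lemma at once, and — importantly — I never need the resolutions in $\mathcal{P}^*(\infty)$ to occur in index order, which lets me absorb out-of-order adversarial deletions by comparing index \emph{sets} rather than orderings.

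To establish $(\dagger)$ I would couple $B$ and $\mathcal{P}^*(\infty)$ through the same event order $S$ (a scheduled deletion is treated as an ordinary activation in $B$, as prescribed) and induct on the meaningful event times $t_0<t_1<\dots$, in the same spirit as the $(a)$/$(b)$ induction used for Proposition~\ref{slowmakespancoupling}. Alongside $(\dagger)$ I carry a \emph{domination invariant} $(\ddagger)$: for every $k$ whose copy lies in the positive part $\{v_1,v_2,\dots\}$ of \emph{both} environments, the position of $A_k^{\mathcal{P}^*(\infty)}$ is at least the position of $A_k^{B}$. Invariant $(\ddagger)$ is the usual attractiveness of the exclusion rule: at an activation of $A_k$, if $A_k$ advances in $B$ to some cell $p+1$ then every lower-index robot sits strictly beyond $p+1$ in $B$, so by $(\ddagger)$ applied to any such robot it sits strictly beyond $p+1$ in $\mathcal{P}^*(\infty)$ as well, the target cell is empty there too, and $A_k$ advances in $\mathcal{P}^*(\infty)$; a scheduled deletion only removes $A_k$ from $\mathcal{P}^*(\infty)$ (freeing a cell) and cannot lower any other robot's $\mathcal{P}^*(\infty)$-position. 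Quarantining $(\ddagger)$ to the positive part is what makes this work: the slow ``feeding'' dynamics of $B$ through $v_0,v_{-1},\dots$, which have no analogue in the instantaneous entrance queue of $\mathcal{P}^*(\infty)$, then never enter the comparison and cause no conflict.

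The real content is in closing $(\dagger)$, which leans on $(\ddagger)$. Suppose $A_k$ crosses $(v_0,v_1)$ in $B$ at $t_m$. If $t_m$ is a deletion event, the same event deletes $A_k$ in $\mathcal{P}^*(\infty)$ (if it had not already entered), so $A_k$ is resolved and $(\dagger)$ holds. If $t_m$ is a genuine activation, then $A_k$ was at $v_0$ with $v_1$ empty in $B$; all lower-index robots have already crossed in $B$ (index order) and so are resolved in $\mathcal{P}^*(\infty)$ by the inductive $(\dagger)$, whence $A_k$ is the lowest-index robot still outside $\mathcal{P}^*(\infty)$ and will enter precisely if $v_1$ is empty there. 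The only robot that could occupy $v_1$ in $\mathcal{P}^*(\infty)$ is the most recently entered one, whose index $k'<k$; since $k'<k$ it has crossed in $B$, so $(\ddagger)$ forces its $B$-position to be $\le 1$, and as it has crossed its $B$-position is exactly $1$, i.e.\ it sits on $v_1$ in $B$ — contradicting that $v_1$ is empty in $B$. Hence $v_1$ is empty in $\mathcal{P}^*(\infty)$, $A_k$ enters, and $(\dagger)$ is maintained.

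I expect the main obstacle to be exactly the interlocking of $(\dagger)$ and $(\ddagger)$: neither survives on its own, and making the attractiveness step and the ``$v_1$-empty transfers from $B$ to $\mathcal{P}^*(\infty)$'' step mesh requires keeping $(\ddagger)$ strictly on the positive part and treating the adversary's deletions asymmetrically — as activations in $B$ but as genuine removals in $\mathcal{P}^*(\infty)$, possibly out of index order and possibly while the robot is still outside. The fiddliest sub-case, which I would verify explicitly, is a deletion scheduled for a robot that is still grinding through the feeding region of $B$ while simultaneously sitting outside $\mathcal{P}^*(\infty)$; the set-counting reduction in the first paragraph is designed precisely so that such events only add to the right-hand side and can never violate $(\dagger)$.
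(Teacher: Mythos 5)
Your proposal is correct and is essentially the paper's own argument: your two invariants $(\dagger)$ and $(\ddagger)$ together are exactly the paper's single inductive inequality $d(A_i^{B}, t_m) - i + 1 \leq d(A_i^{\mathcal{P}^*(\infty)}, t_m)$ (the left side is $A_i$'s vertex index in $B$, and the right side is $0$ while $A_i^{\mathcal{P}^*(\infty)}$ is still outside, so the inequality is vacuous in the feeding region and encodes your ``crossed $\Rightarrow$ resolved'' claim at the boundary), proved by the same induction on meaningful event times with the same ``the blocker in $\mathcal{P}^*(\infty)$ has smaller index, hence is ahead in $B$ as well'' step. The split into a boundary invariant plus an interior domination invariant, and the prefix-set counting, are presentational rather than substantive differences.
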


Recall that $S$ is an event order of some execution of Algorithm \ref{alg:localrule} on the graph environment of interest, $G$. We may randomly sample $S$ by running Algorithm \ref{alg:localrule} on $G$ and logging the events.

The stochastic process resulting from simulating a \textit{randomly sampled} event order $S$ on $B$ is called a  \textit{totally asymmetric simple exclusion process} (TASEP) with step initial condition, first introduced in \cite{spitzer1991interaction}. In this process, robots (called also ``particles'') are activated at exponential rate $1$ and attempt to move rightward whenever no other robot blocks their path. This is precisely the outcome of simulating $S$ on $B$ (since robot activations that lead to a deletion in the other processes are treated as a regular activation in $B$).

In TASEP with step initial condition, let us write $B_t$ to denote the number of robots that have crossed $(v_0, v_1)$ at time $t$. It is shown in \cite{rost1981non} that $B_t$ converges to $\frac{1}{4}t$ asymptotically almost surely (i.e., with probability 1 as $t \to \infty$). \cite{Johansson2000} shows that the deviations are of order $t^{1/3}$. Specifically we have in the limit: 

\vspace{-1em}

\begin{equation} \label{tasepequation}
    \lim_{t\to\infty} \mathbb{P}(B_t - \frac{t}{4} \leq 2^{-4/3} st^{1/3}) = 1 - F_2(-s)
\end{equation}

Valid for all $s \in \mathbb{R}$, where $F_2$ is the Tracy-Widom distribution and obeys the asymptotics $F_2(-s) = O(e^{-c_1 s^3})$ and $1-F_2(s) = O(e^{-c_2 s^{3/2}})$ as $s \to \infty$. We employ Equation \ref{tasepequation} and the prior analysis to prove Theorem \ref{alg1performancethm}:

\begin{proof}
Let $G$ be a graph environment with $n$ vertices. Let $S$ be the randomly sampled event order of an execution of Algorithm \ref{alg:localrule} on $G$. We will bound the slow makespan, $M_{slow}^G$.

We simulate $S$ over the environments $\mathcal{P}(n)$, $\mathcal{P}(\infty)$,  $\mathcal{P}^*(\infty)$, and $B$. From Lemma  \ref{TASEPboundPstarlemma} we know that at all times the number of robots that crossed the $(v_0, v_1)$ edge of $B$,  meaning $B_t$, is less than the number of robots that entered $\mathcal{P}^*(\infty)$ or were deleted before entering. At most $ct/4$ robots are deleted by time $t$, so the number of mobile robots at $\mathcal{P}^*(\infty)$ at time $t$ is at least $B_t - ct/4$. Lemmas \ref{Pinftylemma} and \ref{Pstarinftyslowerlemma} imply this is at least the number of robots at $\mathcal{P}(n)$ at any time $t < M_{slow}^{\mathcal{P}(n)}$.

At any time $t$, there cannot be more than $2n$ robots at $\mathcal{P}(n)$. Hence, if $B_t - ct/4 > 2n$, then $t \geq  M_{slow}^{\mathcal{P}(n)}$. By Proposition \ref{slowmakespancoupling}, we shall then also have $t \geq  M_{slow}^{G}$.

Write $t_n = 8((1-c)^{-1}+n^{-1/3})n$. We wish to show $t_n$ is an upper bound on $M_{slow}^G$ asymptotically almost surely, which is precisely the statement of Theorem \ref{alg1performancethm}. To show this, we are interested in  $X_n = \mathbb{P}(B_{t_n} \leq 2n)$, the probability that $B_{t_n}$ is less than $2n$ at time $t_n$. Showing $X_n$ tends to $0$ as $n \to \infty$ completes our proof. Define the probability

\vspace{-0.75em}
\begin{equation} 
    p(n,s) =
    \mathbb{P}(B_{t_n} - \frac{t_n}{4} \leq 2^{-4/3} s t_n^{1/3})
\end{equation}
\vspace{-1.25em}

$p(n,s)$ is the parametrized left innermost part of Equation \ref{tasepequation} with $t = t_n$ ($n$ is a positive integer). Note that $p(n,s)$ is monotonic increasing in $s$. Define $s_n = (2n-t_n/4)2^{4/3} t_n^{-1/3}$. By algebra, we have $X_n = p(n, s_n)$. Fix any constant $\mathcal{S}^*$ and define $Y_n = p(n, \mathcal{S}^*)$. Again by algebra, $s_n$ tends to $-\infty$ as $n \to \infty$. Hence, for a large $n$, we must have $s_n < \mathcal{S}^*$ and therefore $X_n \leq Y_n$ (by the monotonicity of $p(n,s)$). By Equation \ref{tasepequation}, $Y_n$ tends to $1 - F_2(-\mathcal{S}^*)$ as $n \to \infty$. Hence $X_n$ is at most $1 - F_2(-\mathcal{S}^*)$ in the limit. By taking $\mathcal{S}^* \to -\infty$ we see that $X_n$ in the limit is at most $1 - F_2(\infty) = 0$. \end{proof}

We note that slow makespan can be nearly equal to makespan (see Table \ref{makespancomparetable}, or consider a path graph $\mathcal{P}(n)$ the source vertex placed at $s = v_2$ and robots initially moving rightwards). Hence, one does not ``miss out'' on much by using it to bound makespan.

\subsection{Synchronous time and multiple sources}
We describe extensions of our results to two settings.

\textbf{Synchronous time}. We may consider a synchronous time setting that is discretized to steps $t = 1, 2, \ldots$ such that at every step, all the robots activate at once. In this setting, Algorithm \ref{alg:localrule} ends up exploring just one branch of the tree at a time, like depth-first-search; so no two robots ever attempt to enter the same vertex. Analysis similar to the asynchronous case shows that robots then enter at rate $t/2$ (instead of approximately $t/4$) on $\mathcal{P}(n)$, and  analogous reasoning to Lemma \ref{slowmakespancoupling} and Theorem \ref{alg1performancethm} gives an upper bound of $4(1-c)^{-1} n$ on the makespan of a graph with $n$ vertices, assuming $ct/2$ adversarial events. Consider the path graph $\mathcal{P}(n)$ with $s = v_2$ (not the usual $s = v_1$), and where the robots first fill the vertices $v_3, v_4, \ldots$ with a double layer before reaching $v_1$. The synchronous makespan of this environment is asymptotically $4(1-c)^{-1} n$. Hence, the  bound on the makespan in the synchronous case is exact.

\textbf{Multiple source vertices.} Instead of just having a single source vertex $s$, we may consider environments with multiple source vertices such that each of them corresponds to its own set of robots $A_1, A_2, \ldots$ entering over time. In asynchronous time, Lemma \ref{treelemma} can be generalized to show that $G(t)$ is then a \textit{forest}, and the robots attempt to create a spanning forest of $G$. The technique in this paper can be generalized to show that the makespan bound of Theorem \ref{alg1performancethm} holds. In general graph environments multiple sources may not improve the makespan by much. For example, consider the path graph $\mathcal{P}(n)$ with $k$
 sources on $v_1, v_2, \ldots v_k$. The makespan of this graph is bounded below by the makespan of the path graph  $\mathcal{P}(n-k-1)$ with a single source vertex $v_1$. 
 
\section{Simulation and evaluation}
\label{simulationsection}

For empirical confirmation of our analysis, we numerically simulated our algorithm on a number of environments. On these environments, we measured the makespan and the percentage of robots that crashed for the parameters $c = 0, 0.25, 0.75$, averaging them over 30 simulations per configuration and rounding to the nearest integer. Data on several environments is found in Table \ref{makespancomparetable}. Figure \ref{fig:simulation2} shows stills from some simulations. 

\vspace{-1em}

\begin{figure}[!htb]
    \centering 
    
\medskip
\begin{subfigure}{0.145\textwidth}
  \includegraphics[width=\linewidth]{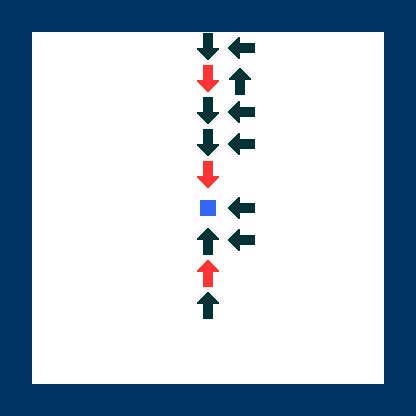}
\end{subfigure}\hfil 
\begin{subfigure}{0.145\textwidth}
  \includegraphics[width=\linewidth]{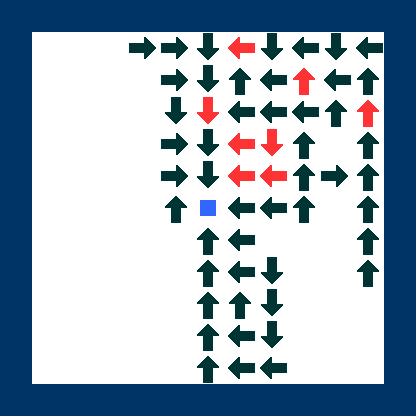}
\end{subfigure}\hfil 
\begin{subfigure}{0.145\textwidth}
  \includegraphics[width=\linewidth]{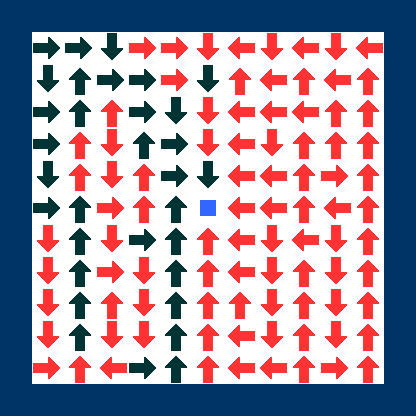}
\end{subfigure}

\vspace{-0.25em}
\medskip 
\begin{subfigure}{0.145\textwidth}
  \includegraphics[width=\linewidth]{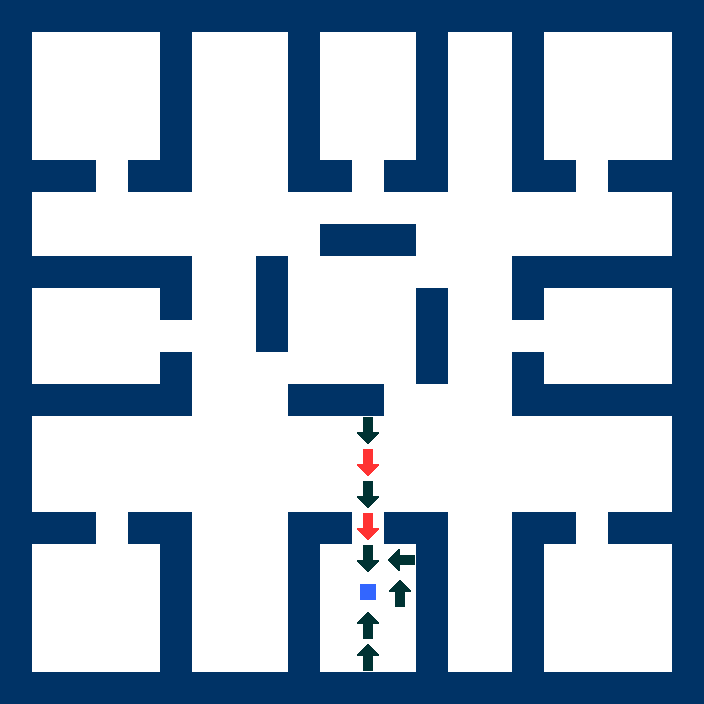}
\end{subfigure}\hfil 
\begin{subfigure}{0.145\textwidth}
  \includegraphics[width=\linewidth]{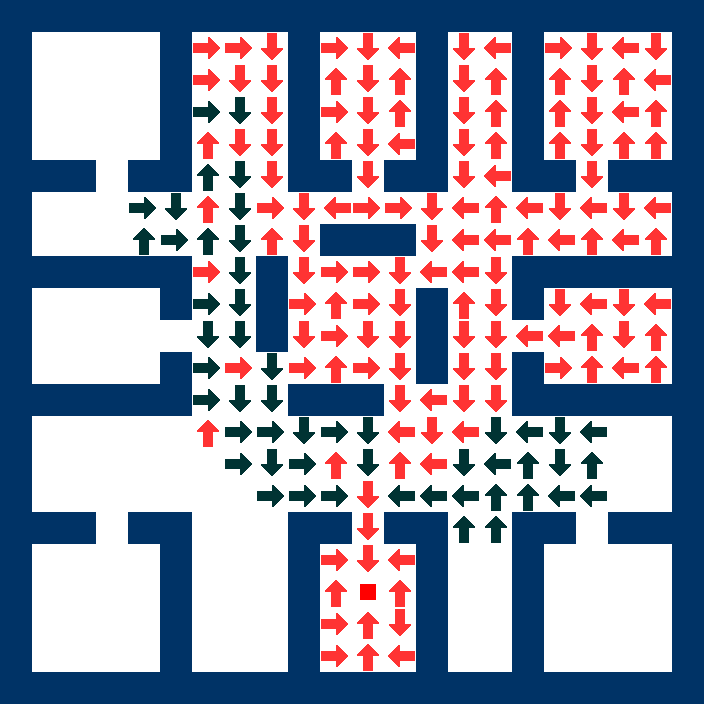}
\end{subfigure}\hfil 
\begin{subfigure}{0.145\textwidth}
  \includegraphics[width=\linewidth]{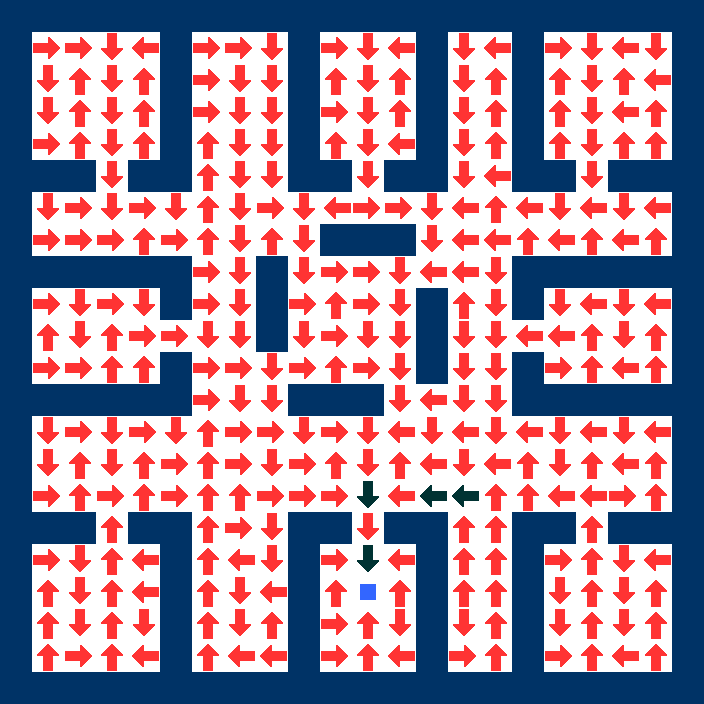}
\end{subfigure}

\caption{An execution of Algorithm \ref{alg:localrule} on (a) an $11\times11$ square grid and (b) an ``indoor'' environment. The legend is same as Figure \ref{fig:simulation}.}
\label{fig:simulation2}
\vspace{-1.5em}
\end{figure}

\begin{table}[!htb]
\begin{tabular}{lllll}
 & n &  $c=0$ & $c=0.25$ & $c=0.75$ \\ \hline
Figure \ref{fig:simulation} &  62  &      272;373  &  321;446 (18\%)   &      555;743 (52\%)     \\ 
Figure \ref{fig:simulation2}, (a)   &  121 & 463;554    &  484;586 (13\%)  &  715;921 (39\%)  \\  
Figure \ref{fig:simulation2}, (b)    &    300  & 1791;1907   &  2154;2281 (19\%)   &  3894;4116 (56\%)    \\ 
$\mathcal{P}(300)$    & 300  & 1677;2325   &  1940;2883  (23\%) &   3727;6147 (66\%)   \\ 
\end{tabular}
\caption{The makespan and slow makespan of Algorithm \ref{alg:localrule} over the environments in the referred-to Figures and over the path graph of length $300$, $\mathcal{P}(300)$ . We vary the crash density parameter $c$. The cell format is \textit{makespan ; slow makespan (\% of robots crashed)}. The column ``$n$'' gives the number of vertices in the environment.}
\label{makespancomparetable}
\vspace{-2em}
\end{table}

From the data, it is clear that makespan is affected by the shape of the environment and by $c$. We see that an increase in the percentage of robots crashed scales makespan up gracefully, and that spacious environments generally have lower makespans. We also confirm that the slow makespans are always lower than the bound of Theorem \ref{alg1performancethm}. Closest to the bound is the scenario where the environment is the path graph  $\mathcal{P}(300)$ and $c=0$, in which case slow makespan is almost exactly the bound, $8 \cdot 300$. This is consistent with our analysis that the $\mathcal{P}(n)$ environment has the largest slow makespan. It also verifies that Theorem \ref{alg1performancethm} gives a correct upper bound. Such data further suggests that for spacious environments, and for large $c$, performance on average is better than the \textit{worst-case} performance guarantee of Theorem \ref{alg1performancethm}. In the simulations, we did not choose our adversarial events to be maximally obstructing, but rather crashed robots arbitrarily--a cleverer adversary would cause the makespan and slow makespan to be closer to the worst-case (and cause a larger percentage of robots to crash).

\section{Discussion}
In swarm robotics, where one must coordinate an enormous robotic fleet, we must anticipate many faults, such as crashing and traffic jams. Because robots in the swarm are usually assumed to be autonomous and have limited computational power, complex techniques for handling such faults are not necessarily feasible. Hence, it is important to ask whether simple rules of behaviour can be effective. To this end, we investigated the problem of covering an unknown graph environment, and constructing an implicit spanning tree, with a swarm of frequently crashing robots. We showed a simple and local rule of behaviour that enables the swarm to quickly and reliably finish this task in the presence of crashes. The swarm's performance degrades gracefully as crash density increases. 

We outline here several directions for future research. First, our model interprets the ``swarm'' part of swarm robotics as a vast and redundant fleet of robots that can be dispersed into the environment over time. We used this model for uniform dispersal, but it would be interesting to adapt it to other kinds of missions, and design algorithms for those missions that can handle crashes or other forms of interference. For example, in \cite{arxivprobabilisticpursuits}, mobile agents entering at a source node $s$ over time sequentially pursue each other to discover shortest paths between $s$ and some target node $t$. The algorithm succeeds even if some of the agents are interrupted and have their location changed.

Next, in this work, we made the simplifying assumption that the environment of the robots is discrete. If the robots instead attempted to cover a continuous planar domain by an algorithm similar to ours, the robots would need to construct a shared discrete graph representation of the environment through the settled robots in $G(t)$ and their markings. We believe that our algorithm can readily be extended to such settings.

Lastly, can we exploit the large number of robots in a swarm to handle other kinds of errors? There are many situations and modes of failure that can be discussed, such as Byzantine robotic agents, or dynamic changes to the environment.


\bibliographystyle{plainnat}
\bibliography{references}

\newpage
\section{Fast Uniform Dispersion of a Swarm - Supplementary Appendix}

Reminder:

\begin{enumerate}[label=(\alph*)]
    \item If $A_i^G$ is not slow or settled at time $t_m$, then $d(A_i^G, t_m) \geq d(A_i^{\mathcal{P}(n)}, t_m)$. 
    
    \item If $A_i^{\mathcal{P}(n)}$ is slow or settled at time $t_m$, then $A_i^G(t_m)$ is slow or settled, and $d(A_i^G, t_m) \leq d(A_i^{\mathcal{P}(n)}, t_m)$.
\end{enumerate}

\subsection{Proof of Lemma \ref{neverdeleteslowcorollary}}

\begin{proof}
Referring to the Lemma's statement, we remind that here  ``time $t_m$'' refers to the configuration of agents at time $t_m$ \textit{before} any scheduled events. Hence, even if something is true at time $t_m$, we still need to show that it remains true after the events that happen at time $t_m$. 

Let $A_i^{\mathcal{P}(n)}$ be slow or settled at time $t_m$. To show $A_i^{\mathcal{P}(n)}$ will not be deleted, it suffices to show the event order $S$ will not delete $A_i^G$. (b) implies  $A_i^G$ is settled or slow at time $t_m$. Lemma \ref{slowdeletelemma} says $S$ never deletes slow agents. $S$ never deletes settled agents of $G$ as, in our model, agents are only deleted when they move, and $S$ obeys the rules of the model when simulated on $G$. Hence, $S$ will not delete $A_i^G$. 

Next we show that $A_i^{\mathcal{P}(n)}$ will not move as a result of an event scheduled for time $t_m$. If $A_i^{\mathcal{P}(n)}$ is settled, this is true by definition. Otherwise, $A_i^{\mathcal{P}(n)}$ is slow. By assumption, (b) is true at \textit{all} times up to $t_m$. Hence, by the same reasoning as the above paragraph, agents of $\mathcal{P}(n)$ that became slow or settled at or prior to time $t_m$ have not been deleted. Consequently, the argument of Lemma \ref{slowdeletelemma} applies also here, allowing us to conclude that agents cannot move after they become slow. In particular this applies to $A_i^{\mathcal{P}(n)}$.
\end{proof}

\subsection{Proof of Lemma \ref{statementaproof}}

\begin{proof}
Only one event occurs at time $t_m$. This event is either an uninterrupted activation of an agent (meaning the agent is not deleted), or an activation that leads to a deletion. If the event is a deletion, (a) holds at time $t_{m+1}$ trivially, so we assume that it is an uninterrupted activation.

Let $A_i^G$ and $A_i^{\mathcal{P}(n)}$ be the agents that are activated at time $t_m$. The depth of any other agent is unchanged, so we need only verify (a) for these two agents. Assuming (a) it true at time $t_m$, it is only possible for (a) to become false at time $t_{m+1}$ if $A_i^G$ did not move, but $A_i^{\mathcal{P}(n)}$ did. We assume this is the case. 

If $A_i^G$ does not move as a result of its activation at time $t_m$, then either it is settled, in which case (a) is true and we are done, or there is a mobile agent at every neighbouring vertex in $G(t_m)$. If $A_i^G$ is mobile and all of its neighbours are slow at time $t_m$, then $A_i^G$ becomes slow at time $t_{m+1}$ and (a) is true. Otherwise there is a mobile agent, $A_j^G$, that is preventing $A_i^G$ from moving and is not slow. We must have that

\begin{equation}
    d(A_i^G, t_{m+1}) + 1
    \label{eq0lemma} = d(A_j^G, t_{m+1})
\end{equation}

Because $A_i^G$ and $A_j^G$ are always moving down a spanning tree $T_S$ of $G$, hence the depth of $A_j^G$ must be precisely one greater than $A_i^G$'s in order to prevent movement. 

Because $A_j^G$ is not activated at time $t_m$, (a) and (b) are still true for it at time $t_{m+1}$. Because $A_j^G$ is not slow or settled, (a) implies that 

\begin{equation}
    d(A_j^G, t_{m+1}) \geq d(A_j^{\mathcal{P}(n)}, t_{m+1})
    \label{eq1lemma}
\end{equation}

And the contrapositive of (b) implies that $A_j^{\mathcal{P}(n)}$ is not settled. However, consider the structure of the graph $\mathcal{P}(n)$: if $A_j^{\mathcal{P}(n)}$ is mobile, then since it entered before $A_i^{\mathcal{P}(n)}$, it must be further ahead. In particular, we must have 

\begin{equation}
    d(A_j^{\mathcal{P}(n)}, t_{m+1}) \geq d(A_i^{\mathcal{P}(n)}, t_{m+1}) + 1
    \label{eq2lemma}
\end{equation} 

As otherwise $A_j^{\mathcal{P}(n)}$ would have prevented $A_i^{\mathcal{P}(n)}$ from moving when activated at time  $t_m$.

(In)equalities \ref{eq0lemma}, \ref{eq1lemma} and \ref{eq2lemma} imply $d(A_i^G, t_{m+1}) \geq  d(A_i^{\mathcal{P}(n)}, t_{m+1})$. This shows (a) is true at time $t_{m+1}$.
\end{proof}

\subsection{Proof of Lemma \ref{statementbproof}}

\begin{proof}
As in Lemma \ref{statementaproof}, we can assume that the event at time $t_m$ is the uninterrupted activation of a pair of agents $A_i^G$ and $A_i^{\mathcal{P}(n)}$, and we need only verify that (b) is still true for this pair of agents. We separate our proof into cases.

\textbf{Case 1:} Assume $A_i^{\mathcal{P}(n)}$ is settled at time $t_{m+1}$. Because $\mathcal{P}(n)$ is a path graph and using Lemma \ref{neverdeleteslowcorollary}, $A_i^{\mathcal{P}(n)}$ can only be settled if every non-deleted agent that entered before it is settled behind it. At time $t_{m+1}$ (b) is still true for all agents other than $A_i^G$ and $A_i^{\mathcal{P}(n)}$. Hence, it follows from (b) that for any non-deleted agent $A_j^G$ where $j < i$ we have:

\begin{equation} 
\label{equationbstatement1}
d(A_j^G, t_{m+1}) \leq d(A_j^{\mathcal{P}(n)}, t_{m+1})
\end{equation}

Algorithm \ref{alg:localrule} guarantees that any agent in $G$ always neighbours a settled agent or is at the same location as a settled agent. Thus, we know that $d(A_i^G, t_{m+1}) \leq d(A_j^G, t_{m+1}) + 1$ for some settled agent $A_j$. Furthermore, this inequality must hold for some $A_j^G$ that entered \textit{before} $A_i^G$ (i.e., $j < i$), because any settled agent that entered after $A_i^G$ must have gone down a different branch of $T_S$, otherwise it would be blocked by $A_i^G$ and unable to settle. Let $j_{max} = max_{j < i} d(A_j^G, t_{m+1})$. Then $d(A_i^G, t_{m+1}) \leq j_{max} + 1$. If this is an equality, $A_i^G$ is necessarily settled.

From Inequality \ref{equationbstatement1} we infer 

\begin{equation}
    d(A_i^{\mathcal{P}(n)}, t_{m+1}) \geq max_{j < i} d(A_j^{\mathcal{P}(n)}, t_{m+1})  + 1 \geq j_{max} + 1 \geq d(A_i^G, t_{m+1})
\end{equation}

Where $d(A_i^{\mathcal{P}(n)}, t_{m+1}) \geq max_{j < i} d(A_j^{\mathcal{P}(n)}, t_{m+1})  + 1$ follows from the fact that $A_i^{\mathcal{P}(n)}$ is ahead of all non-deleted agents that came before it. In the case of equality, $A_i^G$ must be settled. If $A_i^G$ isn't settled, then the inequality is strict. Consequently, it follows from the fact that (a) holds at time $t_{m+1}$ (Lemma \ref{statementaproof}) that $A_i^G$ must be slow. Otherwise,  (a) implies $A_i^G$'s depth is greater than $A_i^{\mathcal{P}(n)}$'s, contradicting the inequality. Either way, (b) is true.

\textbf{Case 2:} Assume $A_i^{\mathcal{P}(n)}$ is slow and not settled at time $t_{m+1}$. If $A_i^{\mathcal{P}(n)}$ is slow at $t_m$, then it follows from (b) that $A_i^G$ is slow or settled at $t_m$, and so activation cannot affect either of these agents, meaning (b) remains true at $t_{m+1}$ and we are done. Thus, we may assume $A_i^{\mathcal{P}(n)}$ is not slow at time $t_m$. 

Using Lemma \ref{neverdeleteslowcorollary}, $A_i^{\mathcal{P}(n)}$ can only become slow at time $t_{m+1}$ if all vertices behind it contain settled agents, and all vertices ahead of it contain two slow agents (one settled and one mobile). If  $d(A_i^{\mathcal{P}(n)}, t_{m})$ is $k$ there are $n+(n-k)=2n-k$ slow or settled agents in $\mathcal{P}(n)$ at time $t_m$. These $2n-k$ agents must have entered $\mathcal{P}(n)$ before $A_i^{\mathcal{P}(n)}$, because any agent that enters after $A_i^{\mathcal{P}(n)}$ must pass it to become slow or settled, and this is impossible because $A_i^{\mathcal{P}(n)}$ is not settled. 

Using (b) we learn from the above that in $G$, at time $t_m$ there are at least  $2n-k$ settled or slow agents that entered before $A_i^G$. Of these, at least $n-k$ agents are slow and mobile, and have greater depth than $A_i^G$ or are in a different branch of $T_S$ (because they arrived before $A_i^G$ and $A_i^G$ could not have passed them). There are thus at most $n - (n-k) = k$ vertices $A_i^G$ could have visited since entering $G$, meaning its depth is at most $k$, and we have $d(A_i^G, t_{m}) \leq d(A_i^{\mathcal{P}(n)}, t_{m})$. 

If this inequality is strict, then from statement (a) we learn that $A_i^G$ is settled or slow, so (b) is true and we are done. Otherwise, $d(A_i^G, t_m) = k$. We saw there are (at least) $n-k$ slow mobile agents in $G$ that have greater depth than $A_i^G$ or are in a different branch of $T_S$. From this, we infer that any descendant of $A_i^G$ must contain a slow mobile agent, or that $A_i^G$ is at a leaf of $T_S$ and has no descendants. Thus, if $A_i^G$ is not already settled or slow, it will become slow after the activation at time $t_m$, since its slow descendants will prevent it from moving. This completes the proof.

\end{proof}

\subsection{Proof of Lemma \ref{Pstarinftyslowerlemma}}

\begin{proof}
Let $A_i^*$ be the copy of $A_i$ simulated over $\mathcal{P}^*(\infty)$. Let $t_0, t_1, t_2, \ldots $ be the meaningful event times of $S$. We show by induction that at any time $t_m$, for all non-deleted agents: 

\begin{equation}
     \textrm{either $A_i^{\mathcal{P}(\infty)}$ is settled  or } d(A_i^*, t_m) \leq d(A_i^{\mathcal{P}(\infty)}, t_m)
    \label{lemmapstarequation}
\end{equation}

This implies any agent that enters $\mathcal{P}^*(\infty)$ must have already or concurrently entered $\mathcal{P}(\infty)$, completing the proof.

The induction statement is trivially true at time $t_0$, as no event has occurred yet. We assume it is true up to time $t_m$, and show it remains true at $t_{m+1}$. 

If the event scheduled for time $t_m$ was a deletion of some agent, the statement remains trivially true (as both simulated versions of the agent are deleted). Otherwise, the scheduled event is the uninterrupted activation of some pair of agents $A_i^*$ and $A_i^{\mathcal{P}(n)}$. 

Any agent $A_j$ where $j \neq i$ does not move, so we need only verify the inductive statement remains true for $A_i^*$ and $A_i^{\mathcal{P}(n)}$. The only situation in which Inequality \ref{lemmapstarequation} is falsified at time $t_{m+1}$ if it is true at time $t_m$ is if $d(A_i^*, t_m) = d(A_i^{\mathcal{P}(\infty)}, t_m)$ and $A_i^{\mathcal{P}(\infty)}$ is mobile at time $t_{m+1}$, but $A_i^*$ manages to move whereas $A_i^{\mathcal{P}(\infty)}$ is blocked by a mobile agent $A_j^{\mathcal{P}(\infty)}$. By the inductive hypothesis, $d(A_j^*, t_m) \leq d(A_j^{\mathcal{P}(\infty)}, t_m)$. Because $\mathcal{P}^*(\infty)$ is a path graph and $j < i$, we know that $d(A_j^*, t) >  d(A_i^*, t)$ at all times $t$ after $A_j^*$ entered the environment. Hence, if $d(A_i^*, t_m) = d(A_i^{\mathcal{P}(\infty)}, t_m)$ and $A_j^{\mathcal{P}(\infty)}$ blocks $A_i^{\mathcal{P}(\infty)}$, then $A_j^*$ must also block $A_i^*$ when it attempts to move. This shows that the inductive hypothesis is correct at time $t_{m+1}$.

\end{proof}

\subsection{Proof of Lemma  \ref{TASEPboundPstarlemma}}

\begin{proof} Unlike Lemma \ref{Pstarinftyslowerlemma}, here we count the number of agents that enter $\mathcal{P}^*(\infty)$, and not the number of currently existing agents that entered it. This means we count also agents that entered at $\mathcal{P}^*(\infty)$ but were deleted. This difference is necessary for the comparison, because agents cannot be deleted from $B$. 

Despite this difference, the proof is very similar to Lemma \ref{Pstarinftyslowerlemma}. One shows by induction on the meaningful event times $t_0, t_1, t_2, \ldots$  that at any time $t_m$, for any $i$ such that $A_i^{\mathcal{P}^*(\infty)}$ was not deleted we have:

\begin{equation}
    \label{Bprocessinequality}
    d(A_i^{B}, t_m)  - i + 1 \leq  d(A_i^{\mathcal{P}^*(\infty)}, t_m) 
\end{equation}

Note that $d(A_i^{B}, t_m)  - i + 1$ is the index of the vertex of $A_i^B$ at time $t_m$. If $A_i^{B}$ crossed $(v_0, v_1)$ we must have $d(A_i^{B}, t_m) - i + 1 \geq 1$. Recalling that if $A_i^{\mathcal{P}^*(\infty)}$ is outside of $G$ at time $t$ then $d(A_i^{\mathcal{P}^*(\infty)}, t) = 0$, we see by  (\ref{Bprocessinequality}) that crossing $(v_0, v_1)$ can only happen if  $A_i^{\mathcal{P}^*(\infty)}$ entered $\mathcal{P}^*(\infty)$, or if $A_i^{\mathcal{P}^*(\infty)}$    was deleted before entering. Hence, the Lemma follows from (\ref{Bprocessinequality}). 

Let us show (\ref{Bprocessinequality}) holds by induction. It holds trivially for all $i$ at $t_0$. Now, assume (\ref{Bprocessinequality}) holds at time $t_m$, and we will show  it holds at time $t_{m+1}$.

Suppose the pair of agents activated at $t_m$ is $A_i^{\mathcal{P}^*(\infty)}$ and $A_i^B$. Then these are the only agents for which (\ref{Bprocessinequality}) might be false at $t_{m+1}$. Assuming (\ref{Bprocessinequality}) is true at $t_m$, it can only become false at $t_{m+1}$ if $d(A_i^{B}, t_m) - i + 1 =   d(A_i^{\mathcal{P}^*(\infty)}, t_m)$, but $A_i^{B}$ successfully moves as a result of activation at time $t_m$ whereas $A_i^{\mathcal{P}^*(\infty)}$ does not and also is not deleted. If $A_i^{\mathcal{P}^*(\infty)}$ does not move this means some $A_j^{\mathcal{P}^*(\infty)}$, $j < i$ is blocking it. Hence, we must have $d(A_j^{\mathcal{P}^*(\infty)}, t_{m+1}) = d(A_i^{\mathcal{P}^*(\infty)}, t_{m+1})  + 1$. By the inductive hypothesis we have $d(A_j^B,t_{m+1}) - j + 1 \leq d(A_j^{\mathcal{P}^*(\infty)}, t_{m+1})$. Since $j < i$, $A_j^B$ is always ahead of $A_i^B$, meaning $d(A_i^B,t_{m+1}) - i + 1 < d(A_j^B,t_{m+1}) - j + 1$. Combining these (in)equalities we get  $d(A_i^B,t_{m+1}) - i + 1 < d(A_i^{\mathcal{P}^*(\infty)}, t_{m+1})  + 1$, hence $ d(A_i^{B}, t_{m+1})  - i + 1 \leq d(A_i^{\mathcal{P}^*(\infty)}, t_{m+1})$. This completes the proof by induction of (\ref{Bprocessinequality}).

\end{proof} 

\end{document}